\pgfplotsset{compat=newest}
\pgfplotsset{compat=1.15}
\definecolor{skyblue1}{rgb}{0.447,0.624,0.812}
\definecolor{scarletred1}{rgb}{0.937,0.161,0.161}
\newcommand{\bhl}[1]{{\color{black} #1}}
\newtheorem{theorem}{Theorem}
\newtheorem{proposition}{Proposition}
\newtheorem{assumption}{Assumption}
\definecolor{mnavy}{RGB}{12,44,86}
\definecolor{mblue}{RGB}{0,40,120}
\begin{document}
%%%%%%%%%%%%%%%%%%%%%%%%%%%%%%%%%%%%%%%%%%%%%%%%%%
%% TITLE
%%%%%%%%%%%%%%%%%%%%%%%%%%%%%%%%%%%%%%%%%%%%%%%%%%
\title{Adaptive Control of Distributed Energy Resources for Distribution Grid Voltage Stability}

%%%%%%%%%%%%%%%%%%%%%%%%%%%%%%%%%%%%%%%%%%%%%%%%%%
%% AUTHORS
%%%%%%%%%%%%%%%%%%%%%%%%%%%%%%%%%%%%%%%%%%%%%%%%%%
% \author{\IEEEauthorblockN{Daniel Arnold, Ciaran Roberts, Sy-Toan Ngo, Shammya Saha, Anna Scaglione, Nathan Johnson, Sean Peisert, and David Pinney}}

\author{Daniel Arnold,~\IEEEmembership{Member,~IEEE,}
        Shammya Saha,~\IEEEmembership{Member,~IEEE,}
        Sy-Toan Ngo,~\IEEEmembership{Member,~IEEE,}
        Ciaran Roberts,~\IEEEmembership{Member,~IEEE,}
        Anna Scaglione,~\IEEEmembership{Fellow,~IEEE,}
        Nathan Johnson,~\IEEEmembership{Member,~IEEE,}
        Sean Peisert,~\IEEEmembership{Member,~IEEE,}
        David Pinney,~\IEEEmembership{Member,~IEEE}
                % <-this % stops a space
\thanks{Daniel Arnold, Ciaran Roberts, Sy-Toan Ngo, and Sean Peisert are with the Lawrence Berkeley National Lab, Berkeley, USA, email: dbarnold@lbl.gov}% <-this % stops a space
\thanks{Shammya Saha is with the Electric Power Research Institute, Knoxville, Tennessee, USA.}% <-this % stops a space
\thanks{Anna Scaglione is with the Electrical and Computer Engineering Department, Cornell Tech, New York, New York, USA.}% <-this % stops a space
\thanks{Nathan Johnson is with The Polytechnic School, Arizona State University, Mesa, Arizona.}% <-this % stops a space
\thanks{David Pinney is with National Rural Electric Cooperative Association.}% <-this % stops a space
% \thanks{Manuscript received April 19, 2005; revised August 26, 2015.}
\thanks{This research was supported in part by the  Cybersecurity, Energy Security, 
and Emergency Response (CESER), Cybersecurity for Energy Delivery Systems (CEDS) program of the U.S. Department of Energy via the Cybersecurity via Inverter-Grid Automatic Reconfiguration (CIGAR) project and the Supervisory Parameter Adjustment for Distribution Energy Storage (SPADES) project under contract DE-AC02-05CH11231. Any opinions, findings, conclusions, or recommendations expressed in this material are those of the authors and do not necessarily reflect those of the sponsors of this work.}
}

% \IEEEauthorblockA{*Department of Mechanical Engineering\\
% **Department of Electrical Engineering and Computer Science\\
% ***Energy and Resources Group\\
% University of California, Berkeley\\
% Berkeley, California 94720\\
% (dbarnold, msankur, dobbe, kwbrady, dcal, vonmeier) @berkeley.edu}}

\maketitle

%%%%%%%%%%%%%%%%%%%%%%%%%%%%%%%%%%%%%%%%%%%%%%%%%%
%% SPONSOR FOOTNOTE
%%%%%%%%%%%%%%%%%%%%%%%%%%%%%%%%%%%%%%%%%%%%%%%%%%

% \let\thefootnote\relax\footnote{Daniel Arnold, Ciaran Roberts, and Sean Peisert are with Lawrence Berkeley National Laboratory. Shammya Saha, and Anna Scaglione are with Arizona State University. This research was supported in part by the Director, Cybersecurity, Energy Security, and Emergency Response, Cybersecurity for Energy Delivery Systems program, of the U.S. Department of Energy, under contract DE-AC02-05CH11231.  Any opinions, findings, conclusions, or recommendations expressed in this material are those of the authors and do not necessarily reflect those of the sponsors of this work.}

%%%%%%%%%%%%%%%%%%%%%%%%%%%%%%%%%%%%%%%%%%%%%%%%%%
%% ABSTRACT
%%%%%%%%%%%%%%%%%%%%%%%%%%%%%%%%%%%%%%%%%%%%%%%%%%
\begin{abstract}
    
Volt-VAR and Volt-Watt functionality in photovoltaic (PV) smart inverters provide mechanisms to ensure system voltage magnitudes and power factors remain within acceptable limits.  However, these control functions can become unstable, introducing oscillations in system voltages when not appropriately configured or maliciously altered during a cyberattack.  In the event that  Volt-VAR and Volt-Watt control functions in a portion of PV smart inverters in a distribution grid are unstable, the proposed adaptation scheme utilizes the remaining and stably-behaving PV smart inverters and other Distributed Energy Resources \bhl{to mitigate the effect of the instability}. The adaptation mechanism is entirely decentralized, model-free, communication-free, and requires virtually no external configuration. We provide a derivation of the adaptive control approach and validate the algorithm in experiments on the IEEE 37 \bhl{and 8500} node test feeders.  

\end{abstract}

%%%%%%%%%%%%%%%%%%%%%%%%%%%%%%%%%%%%%%%%%%%%%%%%%%
%% IEEE KEYWORDS
%%%%%%%%%%%%%%%%%%%%%%%%%%%%%%%%%%%%%%%%%%%%%%%%%%
\begin{IEEEkeywords}
Adaptive Control, Cyber Security, Distributed Energy Resources, Smart Inverter,  Voltage Stability
\end{IEEEkeywords}

%%%%%%%%%%%%%%%%%%%%%%%%%%%%%%%%%%%%%%%%%%%%%%%%%%
%% Nomenclature
%%%%%%%%%%%%%%%%%%%%%%%%%%%%%%%%%%%%%%%%%%%%%%%%%%
\nomenclature[01]{$v_{i}$}{Voltage magnitude at node $i$}
\nomenclature[02]{$\mathbf{Z}$}{Matrix collection of network resistance and reactance matrices, $\in \mathcal{R}^{n \times 2n}$}
\nomenclature[03]{$f_{p,i}$}{Volt-Watt curve for node $i$ smart inverter}
\nomenclature[04]{$f_{q,i}$}{Volt-VAR curve for node $i$ smart inverter}
\nomenclature[05]{$C_{p,i}$}{Lipschitz constant for node $i$ Volt-Watt curve}
\nomenclature[06]{$C_{q,i}$}{Lipschitz constant for node $i$ Volt-VAR curve}
\nomenclature[07]{$\mathbf{C}_{s}$}{Matrix collection of Lipschitz constants for Volt-Watt and Volt-VAR functions, $\in \mathcal{R}^{2n \times n}$}
\nomenclature[08]{$\mathbf{s}$}{Vector collection of nodal active and reactive powers, respectively, generated by DER, $\in \mathcal{R}^{2n \times 1}$}
\nomenclature[09]{$\xi_{i}$}{Low pass filtered node $i$ voltage magnitude}
\nomenclature[10]{$u_{p,i}$}{Active power injected by node $i$ DER capable of adaptive power injection}
\nomenclature[11]{$u_{q,i}$}{Reactive power injected by node $i$ DER capable of adaptive power injection}
\nomenclature[12]{$w_{i}$}{Adaptive voltage bias applied to node $i$ smart inverter Volt-VAR and Volt-Watt functions}

\printnomenclature

%%%%%%%%%%%%%%%%%%%%%%%%%%%%%%%%%%%%%%%%%%%%%%%%%%
%% PAPER
%%%%%%%%%%%%%%%%%%%%%%%%%%%%%%%%%%%%%%%%%%%%%%%%%%

\section{Introduction}
\label{sec:introduction}
Increasing adoption of Distributed Energy Resources (DER), specifically rooftop photovoltaic (PV) generation systems, is challenging many conventionally-held models and practices regarding the operation of the electric power system.  While the presence of DER gives individuals and communities the ability to self-generate at least a portion of their load, they also make proper management of the power system more difficult as many DER are not utility-owned/operated.

Emerging standards \cite{IEEE_1547, rule21, inverter2016} are encouraging the use of device-level modulation of active and reactive power injection in response to local grid conditions.  These autonomous control functions allow DER to quickly correct undesirable voltages and power factors at the point of injection and (in theory) alleviate the need for a response from the grid managing entity.

Although these autonomous control functions, specifically smart inverter Volt-VAR (VV) and Volt-Watt (VW) controllers, are well-intentioned, numerous works have emerged showing that proper configuration of individual devices is crucial for the stable operation of the DER population.  Jahangiri et al. \cite{jahangiri2013distributed} discussed the phenomenon of \textquote{hunting} in voltages in systems with VV control. Farivar et al. \cite{farivar2013equilibrium} modeled the interaction between system voltage magnitudes and PV inverter VV functions as a feedback control loop which explicitly tied the slopes of VV controllers of inverters to unstable (highly oscillatory) reactive power injections.  Although the instability threshold depends on the network's specifics, instability is reached when the slopes of the VV control curves become too steep. Numerous other works have also modeled the inverter/grid interaction as a first-order feedback controller and arrived at similar stability conditions \cite{zhou2016local, braslavsky2017voltage, bakerNetwork2017, eggli2020stability, saha2020lyapunov}. Moreover, adaptive control approaches have been previously applied to improve the interaction of PV systems and the electric grid. For instance, Ghasemi et al. \cite{GHASEMI2016prevention} considered a control law to adapt PV reactive power injection for loss minimization and over-voltage prevention.  Furthermore, the standards for smart inverter functionality have changed and expanded rapidly in the past few years (e.g., IEEE 1547-2018 \cite{IEEE_1547}), which has led to a variety of different control interfaces and debate in the industry over best practices for managing smart inverter functionality.

While instabilities may arise naturally in a system (due to system reconfiguration, poor parameterizations of VV/VW functions, or intermittency in PV penetration), the remote update capability of many smart inverter devices presents a vulnerability that a malicious entity could purposefully exploit to destabilize the smart inverter/grid interaction \cite{sahoo2019cyber}. Security researchers have identified exploitable vulnerabilities in deployed inverter firmware \cite{practical2017}, foreign nations are actively targeting the US bulk power system \cite{russian2018}, and in at least one instance, a US inverter control system has been successfully attacked \cite{risks2019}.  An excellent example of the extent to which aggregations of smart inverters can be remotely updated was illustrated in Hawaii, where local utilities worked with a smart inverter vendor to remotely update the autonomous control functions of 800,000 inverters in a single day \cite{spectrum2015}.  

\subsection{Contributions}
\begin{figure}[]
    \centering
    \includegraphics[scale=0.99]{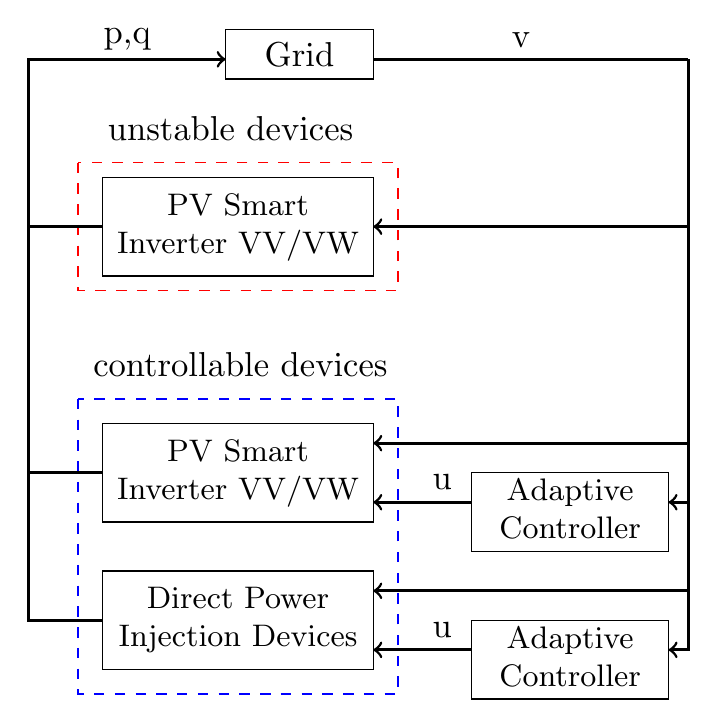}
    \caption{Block diagram of controllable devices.}
    \label{fig:devices_block_diagram}
\end{figure}
In this work, we propose an adaptive control approach to adjust the VV/VW control functions of stable PV systems (which we will henceforth refer to as \emph{non-compromised}) and active/reactive power injections of other DER (e.g., battery storage systems) to mitigate voltage instabilities in the system. We utilize a Model Reference Adaptive Controller (MRAC) approach \cite{astrom2008adaptive} to derive a stabilizing control law.  As MRACs utilize a stable \emph{reference model} to compare to the unstable plant, our work employs the low pass filtered AC grid voltage magnitude measured by the PV system/DER as a proxy for the stable reference model.  This particular choice of reference model makes the algorithm decentralized, (system) model-free, and communications-free. The proposed adaptation mechanism is designed via Lyapunov analysis to drive the error between the plant and the reference asymptotically to zero.   

The proposed adaptation law complies with existing smart inverter standards \cite{IEEE_1547,rule21, inverter2016}, allowing for a decentralized implementation directly within the smart inverter or DER with minimum additional investment.  The control law generalizes to both controllable PV smart inverter systems and devices capable of direct power injection (the adaptation law is identical for both device classes). Fig. \ref{fig:devices_block_diagram} shows a high-level diagram of the role of the adaptive control approach, illustrating three categories of devices in feedback with the electric grid. All devices can inject active ($p$) and reactive ($q$) power into the grid and can measure the voltage ($v$) at their point of interconnection.  While a portion of smart inverters in the system with VV/VW capabilities are unstable, another portion of smart inverters and other DER capable of directly injecting power into the grid are adjusted by the proposed adaptive control signal $u$. For the purposes of this work, direct power injection devices are systems that do not determine their power injection setpoints indirectly through another control mechanism such as VV/VW control.  An example of a direct power injection device is a battery storage system capable of supporting direct charge/discharge requests \cite{inverter2016}.  For devices such as these, $u$ is the additional amount of active/reactive power injected (or consumed, if $u$ changes sign) into the grid. However, for PV smart inverter VV/VW controllers (or both), $u$ is a voltage bias added to the measured grid voltage magnitude.

\bhl{In a related work by Singhal et al. \cite{singhal2019real}, the authors adopted a similar strategy of translating the smart inverter VV curves for steady-state error reduction and flattened VV curves to mitigate instabilities.  While stretching/flattening VV curves proved effective in reducing oscillations in situations considered in their analysis, the strategy has two drawbacks: 1) it does not account for the scenario where a subset of the PV devices in a given system are solely responsible for creating instabilities and are not controllable, and 2) the algorithm requires the use of \textquote{correction factors} which are estimated offline based on sensitivity analysis and engineering judgment.  These correction factors would need to be estimated by an entity with complete grid knowledge and relayed to individual smart inverters, requiring substantial communications overhead that creates a cyber-security vulnerability.  Additionally, correction factor computation would need to be carried out periodically as new PV systems are installed, or the grid is reconfigured.}  Note that we explicitly assume that unstable PV smart inverter systems are not controllable as they may have been, in the worst case, compromised via a cyber-attack and should be considered unreliable for control.  

\bhl{
To summarize, the contribution of this paper is the development of an adaptive control scheme to mitigate inverter-driven oscillations caused by a portion of DER smart inverters with unstable VV/VW settings.  The adaptive control scheme has the following properties:
\begin{enumerate}
    \item The approach is model-free and requires no knowledge of the topology of the system.
    \item The approach utilizes \emph{non-compromised} DER to mitigate oscillations introduced by other smart inverters.
    \item The adaptation law complies with existing smart inverter standards (IEEE 1547 \cite{IEEE_1547}).
    \item The scheme is capable of mitigating unstable inverter-driven oscillations in the seconds after the oscillations first manifest in the network.
\end{enumerate}
}
The paper is organized as follows. In Section \ref{sec:preliminaries} we briefly derive a linear balanced power flow model for subsequent control design and discuss smart inverter dynamic modeling assumptions. Derivation of the adaptive control scheme is presented in Section \ref{sec:control_design}.  Simulation results showing the controller's effectiveness through time series simulations of three-phase \emph{unbalanced} distribution systems are discussed in Section \ref{sec:results}.  Finally, concluding remarks are presented in Section \ref{sec:conclusions}.

\section{Preliminaries}
\label{sec:preliminaries}

\bhl{Unless otherwise stated, upper case (lower case) boldface letters will be used to denote matrices (column vectors).  $\vert \mathbf{x} \vert$ denotes the element-wise absolute value of the vector $\mathbf{x}$. The operation $\text{diag}(\mathbf{x})$, where $\mathbf{x} \in \mathcal{R}^{n\times 1}$, returns an $n \times n$ matrix with the elements of $\mathbf{x}$ along the main diagonal.}

%%%%%%%%%%%%%%%%%%%%%%%%%%%%%%%%%%%%%%
\subsection{Linear Power Flow Model Derivation}
Let the graph $\mathcal{G} = (\mathcal{N} \cup \{0\},\mathcal{L})$ represent a balanced radial distribution feeder, where $\mathcal{N}$ is the set of nodes (excluding the substation) and $\mathcal{L}$ is the set of line segments, where $|\mathcal{N}| = |\mathcal{L}| = n$.  For a given bus $i \in \mathcal{N}$, let $\mathcal{L}_{i}$ (where $\mathcal{L}_{i} \subseteq \mathcal{L}$) denote the collection of line segments from node 0 (e.g. the substation) to node $i$. The \emph{DistFlow} equations \cite{baran1989optimal} capture the relationship between power flowing in line segment $(i,j) \in \mathcal{L}$ and the voltage magnitude drop between nodes $i$ and $j$:

\begin{subequations}
\begin{gather}
    P_{ij} = p_{j}^{c} - p_{j}^{g} + r_{ij}c_{ij} + \sum_{k:(j,k) \in \mathcal{L}} P_{jk} \label{eq:Pij} \\ 
    Q_{ij} = q_{j}^{c} - q_{j}^{g} + x_{ij}c_{ij} + \sum_{k:(j,k) \in \mathcal{L}} Q_{jk} \label{eq:Qij}\\
    v_{j}^{2} - v_{i}^{2} = -2 \big( r_{ij}P_{ij} + x_{ij}Q_{ij}\big) + \big( r_{ij}^{2} + x_{ij}^{2}\big)c_{ij}^{2} \label{eq:v_drop},
\end{gather}
\end{subequations}

\noindent where $v_{i}^{2}$ is node $i$ squared voltage magnitude, $P_{ij}$ and $Q_{ij}$ denote the active/reactive power flowing in line segment $(i,j)$, $r_{ij}$ and $x_{ij}$ are line segment $(i,j)$ resistance and reactance, and $c_{ij}$ are losses.  For node $i$, active (reactive) power consumption is denoted by $p^{c}_{i}$ ($q_{i}^{c}$) and active (reactive) power generation, due to DER, is denoted by $p^{g}_{i}$ ($q_{i}^{g}$).  

Consistent with \cite{baran1989network,farivar2013equilibrium}, we neglect losses in \eqref{eq:Pij} - \eqref{eq:v_drop} which is achieved via setting $c_{ij} = 0$ for all  $(i,j) \in \mathcal{L}$.  Furthermore, as $v_{i} \approx 1$ we approximate $v_{j}^{2} - v_{i}^{2} \approx 2(v_{j} - v_{i})$.  Let $\beta(j)$ denote the set of all nodes descended from $j$ (including $j$ itself).  With these changes, the \emph{DistFlow} model becomes:
\begin{subequations}
\begin{gather}
    P_{ij} = \sum_{k \in \beta(j)}\big( p_{k}^{c} - p_{k}^{g} \big) \label{eq:P_lin} \\
    Q_{ij} = \sum_{k \in \beta(j)}\big( q_{k}^{c} - q_{k}^{g} \big) \label{eq:Q_lin}\\
    v_{i} - v_{j} = r_{ij}P_{ij} + x_{ij}Q_{ij}. \label{eq:v_lin}
\end{gather}
\end{subequations}

The now linearized system of \eqref{eq:P_lin} - \eqref{eq:v_lin} can be more compactly represented via substituting \eqref{eq:P_lin} and \eqref{eq:Q_lin} into \eqref{eq:v_lin} and making successive substitutions of voltages from upstream nodes yielding node $i$ voltage as a function of feeder head voltage $v_{0}$.  If one defines the following vectors:
\begin{subequations}
\begin{gather}
    \mathbf{v} = [v_{1}, \ldots, v_{n}]^\top, \quad \mathbf{v_{0}} = v_{0}\mathbf{1} \\
    \mathbf{p^{c}} = [p_{1}^{c}, \ldots, p_{n}^{c}]^\top, \quad \mathbf{p^{g}} = [p_{1}^{g}, \ldots, p_{n}^{g}]^\top \\
    \mathbf{q^{c}} = [q_{1}^{c}, \ldots, q_{n}^{c}]^\top, \quad \mathbf{q^{g}} = [q_{1}^{g}, \ldots, q_{n}^{g}]^\top,
\end{gather}
\end{subequations}

\noindent then the system of \eqref{eq:P_lin} - \eqref{eq:v_lin} can be recast in vector form:
\begin{gather}
    \mathbf{v} = \mathbf{v_{0}} + \mathbf{R}\big(\mathbf{p^{g}} - \mathbf{p^{c}}\big) + \mathbf{X}\big(\mathbf{q^{g}} - \mathbf{q^{c}}\big) \label{eq:lindistflow},
\end{gather}

\noindent where $\mathbf{R}$ and $\mathbf{X}$ are completely positive matrices \cite{farivar2013equilibrium} and 
\begin{subequations}
\begin{gather}
    R_{ij} = \sum_{(h,k) \in \mathcal{L}_{i} \bigcap \mathcal{L}_{j}} r_{hk} \label{eq:Rij} \\ X_{ij} = \sum_{(h,k) \in \mathcal{L}_{i} \bigcap \mathcal{L}_{j}} x_{hk}\label{eq:Xij}.
\end{gather}
\end{subequations}

Defining $\mathbf{Z} = [\mathbf{R}, \mathbf{X}]$, $\mathbf{s}^{c} = [\mathbf{p}^{c},\mathbf{q}^{c}]^\top$, and $\mathbf{s}^{g} = [\mathbf{p}^{g},\mathbf{q}^{g}]^\top$, \eqref{eq:lindistflow} can expressed compactly as:
\begin{gather}
    \mathbf{v} = \mathbf{v_{0}} + \mathbf{Z}\big(\mathbf{s}^{g} - \mathbf{s^{c}}\big) \label{eq:lindistflow2},
\end{gather}

%%%%%%%%%%%%%%%%%%%%%%%%%%%%%%%%%%%%%
\subsection{Smart Inverter Models}

Smart inverter VV and VW functions compute reactive and active power set-points, respectively, as functions of deviations of locally sensed voltages from a nominal value (typically $1$ p.u.).  Let $f_{p,i}(v_{i})$ and $f_{q,i}(v_{i})$ denote the VV and VW control functions for a smart inverter at node $i$.  \bhl{These control laws are depicted in Figs. \ref{fig:vvc} - \ref{fig:vwc}, respectively, and consist of continuous piece-wise linear functions of the voltage deviation $v_{i} - v_{\text{nom}}$.  The functions are parameterized by the vector $\bm{\eta} = [\eta_{1}, \dots, \eta_{5}]$, which can be used to alter the location and slopes of non-zero segments of the controllers.}  We make the following assumptions regarding these functions \cite{farivar2013equilibrium, farivar2015local,saha2020lyapunov}:

\begin{assumption}
\label{assumption:f_1}
The functions $f_{p,i}(v_{i})$ and $f_{q,i}(v_{i})$ are monotonically decreasing and continuously piece-wise differentiable.
\end{assumption}

\begin{assumption}
\label{assumption:f_2}
Both $f_{p,i}(v_{i})$ and $f_{q,i}(v_{i})$ have bounded derivatives, i.e. there exists $C_{p,i} < +\infty$ and $C_{q,i} < +\infty$ such that $\lvert f^{\prime}_{p,i}(v_{i}) \rvert \le C_{p,i}$ and $\lvert f^{\prime}_{q,i}(v_{i}) \rvert \le C_{q,i}$ for all $v_{i}$.
\end{assumption}

Let $\bar{s}_{i}$ denote the rated apparent power (i.e. the inverter capacity) of the smart inverter at node $i$.  Similarly, let $\bar{p}_{i}$ denote the maximum available active power capable of being sourced at the present irradiance level.  Following the analysis of \cite{braslavsky2017voltage}, $\bar{p}_{i}$ can be expressed as a fraction of the capacity of the $i^{th}$ inverter:
\begin{equation}
    \bar{p}_{i} = \lambda \bar{s}_{i}, \quad 0 < \lambda \le 1, \label{eq:pbar} 
\end{equation}

\noindent where $\lambda = 1$ corresponds to the inverter generating the maximum amount of active power.  In situations where the amount of active power generated is less than $\bar{s}_{i}$, some inverter devices support the use of the excess system capacity for reactive power generation.  The maximum amount of reactive power available for injection/consumption, denoted by $\bar{q}_{i}(v_{i})$, is a function of hardware limitations ($q_{i}^{\text{lim}}$) and the available reactive power \cite{saha2020lyapunov}:
\begin{equation}
    \bar{q}_{i}(v_{i}) = \min\Bigg(q_{i}^{\text{lim}}, \sqrt{\bar{s}_{i}^{2} - f_{p,i}^{2}(v_{i})}\Bigg). \label{eq:qbar}
\end{equation}

A derivation of the Lipschitz constants for both the VV and VW control functions depicted in  Figs. \ref{fig:vvc} - \ref{fig:vwc} can be found in \cite{saha2020lyapunov}.  

\begin{figure}[h!]
\centering
\includegraphics[width=1\columnwidth]{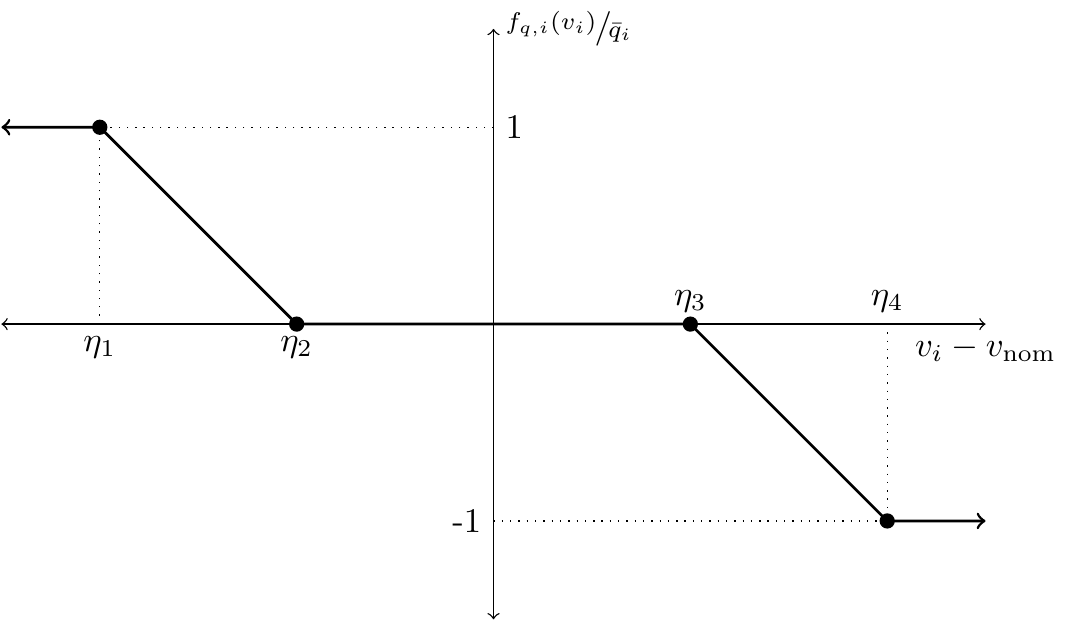}
% \resizebox {21pc} {!} { \input{tikz/VVC.tex} }
\caption{Inverter Volt-VAR curve.  Positive values denote VAR injection.  $v_{\text{nom}}$ is the nominal voltage value.}
\label{fig:vvc}
\end{figure}

\begin{figure}[h!]
\centering

% \resizebox {21pc} {!} { \input{tikz/VWC.tex} }
\includegraphics[width=1\columnwidth]{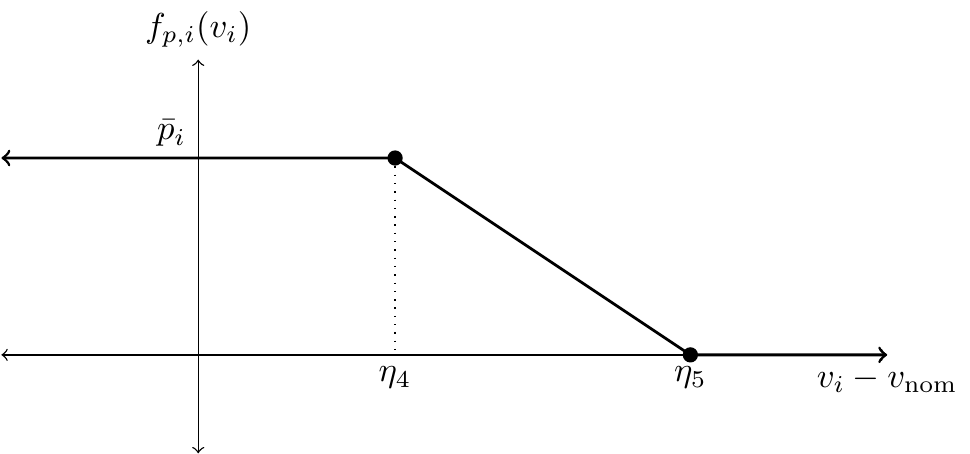}
\caption{Inverter Volt-Watt curve.  Positive values denote watt injection.  $v_{\text{nom}}$ is the nominal voltage value.}
\label{fig:vwc}
\end{figure}

A block diagram of the smart inverter model considered in this work is shown in Fig. \ref{fig:inverter_block_diagram} with VV and VW control logic.  As is shown in the figure, the grid voltage $v$ is the input to the VV and VW controllers.  The maximum available active power from the solar array, $\bar{p}$, is also input into the VW controller, which along with $v$, determines the maximum amount of reactive power available for injection/consumption $\bar{q}$ that is then input to the VV controller.  The active and reactive power setpoints produced by the VW and VW controllers are then low pass filtered by $H_{O}(s)$ to produce the active and reactive power injections that are injected into the grid. These filters serve to limit the rate at which the active and reactive powers injected by PV systems can change and do not represent physical constraints of the smart inverter devices themselves \cite{inverter2016}.

% \begin{figure}[h]
% \includegraphics[width=0.47\textwidth]{}
% \caption{Block diagram of inverter and Volt-VAR/Volt-Watt control logic.}
% \label{fig:inverter_block_diagram}
% \end{figure}

\begin{figure}[h]
\includegraphics[width=0.48\textwidth]{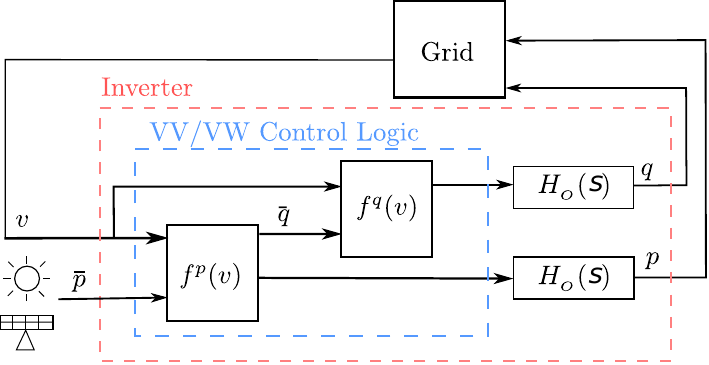}
\caption{Block diagram of VV and VW control logic of an inverter.}
\label{fig:inverter_block_diagram}
\end{figure}

%%%%%%%%%%%%%%%%%%%%%%%%%%%%%%%%%%%%%
\bhl{
\subsection{Smart Inverter Dynamics}
\label{subsec:stability}

In this section we develop a dynamic model of smart inverters, modeled by Fig. \ref{fig:inverter_block_diagram}, connected to the distribution grid.  Without loss of generality, we assume the presence of a VV and VW capable smart inverter at each node in the system.  To begin, let $\mathbf{f}(\mathbf{v}) = [\mathbf{f}_{p}(\mathbf{v}), \mathbf{f}_{q}(\mathbf{v})]^\top$ denote the collection of inverter VV and VW functions at each node in $\mathcal{G}$, where:
\begin{subequations}
\begin{align}
    \mathbf{f}_{p}( \mathbf{v}) &= [f_{p,1}(v_{1}), \ldots, f_{p,n}(v_{n})]^\top \label{eq:fpi}\\
    \mathbf{f}_{q}( \mathbf{v}) &= [f_{q,1}(v_{1}), \ldots, f_{q,n}(v_{n})]^\top \label{eq:fqi},
\end{align}
\end{subequations}

\noindent where, according to Assumptions \ref{assumption:f_1}-\ref{assumption:f_2}, both $f_{p,i}(v_{i})$ and $f_{q,i}(v_{i})$ are locally Lipschitz with constants $C_{p,i}$ and $C_{q,i}$, respectively.  Define the matrices
\begin{subequations}
    \begin{align}
        \mathbf{C}_{p} &= \text{diag}([C_{p,1}, \ldots, C_{p,n}])\\
        \mathbf{C}_{q} &= \text{diag}([C_{q,1}, \ldots, C_{q,n}])\\
        \mathbf{C}_{s} &= \mqty[ \mathbf{C}_{p} & \mathbf{C}_{q}]^\top.
    \end{align}
\end{subequations}

Under the additional assumption that active and reactive power consumption due to system loads change slowly with respect to inverter control actions, \eqref{eq:lindistflow2} can be recast in the following form:
\begin{align}
    \mathbf{v} = \mathbf{Z}\mathbf{s} + \underbrace{\mathbf{v_{0}} - \mathbf{Z}\mathbf{s^{c}}}_{\mathbf{\bar{v}}} \label{eq:lindistflow3},
\end{align}
% \begin{gather}
%     \mathbf{v} = \mathbf{R}\mathbf{p} + \mathbf{X}\mathbf{q} +  \underbrace{\mathbf{v_{0}} - \mathbf{R}\mathbf{p^{c}} - \mathbf{X}\mathbf{q^{c}}}_{\mathbf{\bar{v}}} \label{eq:lindistflow2},
% \end{gather}
\noindent where the superscript has been dropped from $\mathbf{s}$ for convenience and $\mathbf{\bar{v}}$ is treated as constant. Consistent with Fig. \ref{fig:inverter_block_diagram}, the dynamics of the inverter consist of nonlinear Volt-VAR \& Volt-Watt controllers in series with first order low pass filters can be expressed as \cite{saha2020lyapunov}:
\begin{subequations}
\begin{align}
    \mathbf{T}\mathbf{\dot{s}} &= \mathbf{f}(\mathbf{v}) -\mathbf{s}, \label{eq:sdot}\\
    \mathbf{v} &= \mathbf{Z}\mathbf{s} + \mathbf{\bar{v}} \label{eq:v}
\end{align}
\end{subequations}
\noindent where $\mathbf{T} \in \mathbb{R}^{2n \times 2n}$ is a diagonal and positive definite matrix that collects low pass filter time constants.  Substituting \eqref{eq:v} into \eqref{eq:sdot} yields the desired dynamics in terms of the state variable $\mathbf{s}$:
\begin{equation}
    \mathbf{T}\mathbf{\dot{s}} = \mathbf{f}(\mathbf{Z}\mathbf{s} + \mathbf{\bar{v}}) -\mathbf{s}. \label{eq:sdot2}
\end{equation}

A derivation of the stability criterion of \eqref{eq:sdot2} in terms of the system impedances and smart inverter Lipschitz constants can be found in Appendix \ref{subsec:appendix_stability}.
}

\section{Adaptive Control Design}
\label{sec:control_design}
The proposed adaptive control approach uses non-compromised (i.e. stably-behaving) devices to drive system voltages to regions where the compromised smart inverter VV/VW controllers produce constant power with respect to changing voltages (i.e., the flat regions of Figs. \ref{fig:vvc} - \ref{fig:vwc}).  Two types of devices are considered: 1) devices capable of direct power injection/consumption, 2) and non-compromised PV systems (see Fig. \ref{fig:devices_block_diagram}).
%%%%%%%%%%%%%%%%%%%%%%%%%%%%%%%%%%%%%%%%%%%%%%%%%%%%%%%%%%%%%%%%%%%%%%%%
\subsection{Adaptive (Direct) Power Injection/Consumption}
\label{subsec:direct_power_injection}
In this section, we consider the effect of adaptive power injection/consumption (both active and reactive power) as a means to stabilize \eqref{eq:sdot} - \eqref{eq:v}, in the event that the criteria of Proposition \ref{prop:direct} is violated.  Devices capable of providing this capability include many types of DER (e.g., battery storage, PV systems, and electric vehicles), and conventional loads (e.g., demand response).  The goal in this section is to derive a decentralized control law to determine how much active/reactive power should be injected (or consumed) by individual devices in order to achieve stability.  

To begin, consider $l$ devices capable of directly injecting \emph{active} power which are deployed at a subset of nodes in $\mathcal{G}$ (e.g., $l \le n$).  Let $\mathbf{u}_{p} \in \mathcal{R}^{l \times 1}$ represent the collection of these controllers.  The controllers collected in $\mathbf{u}_{p}$ are mapped to nodal locations in $\mathcal{G}$ via the sparse matrix $\mathbf{B}_{p} \in \mathcal{R}^{n \times l}$ where $\mathbf{B}_{p}(i,j) = 1$ indicates the controller at entry $j$ in $\mathbf{u}_{p}$ is located at node $i$ in $\mathcal{G}$.  As such, each row and column of $\mathbf{B}_{p}$ consists of all $0$s with at most a single entry equal to 1.

Additionally, consider $m$ devices capable of directly injecting \emph{reactive} power which are individually deployed at a subset of nodes in $\mathcal{G}$ (e.g., $m \le n$). Let $\mathbf{u}_{q} \in \mathcal{R}^{m\times 1}$ represent the collection of these controllers.  Similarly to the active power case, controllers in $\mathbf{u}_{q}$ are mapped to nodal locations in $\mathcal{G}$ via the sparse matrix $\mathbf{B}_{q} \in \mathcal{R}^{n \times m}$ where $\mathbf{B}_{q}(i,j) = 1$ indicates the controller at entry $j$ in $\mathbf{u}_{q}$ is located at node $i$ in $\mathcal{G}$.  As such, each row and column of $\mathbf{B}_{q}$ consists of all $0$s with at most a single entry equal to 1.  Now, define:
\begin{equation}
    \mathbf{u} = [\mathbf{u}_{p},\mathbf{u}_{q}]^\top, \quad \mathbf{B} = \mqty[ \mathbf{B}_{p} & \mathbf{0} \\ \mathbf{0} & \mathbf{B}_{q} ], \label{eq:u_and_B}
\end{equation}

\noindent where $\mathbf{u} \in \mathcal{R}^{(l+m)\times 1}$ and $\mathbf{B} \in \mathcal{R}^{2n \times (l+m)}$.  With these definitions, the effect of direct power injection control on the dynamics of \eqref{eq:sdot} - \eqref{eq:v} can be expressed as:
\begin{gather}
    \mathbf{T}\mathbf{\dot{s}} = \mathbf{f}(\mathbf{Z}\mathbf{s} + \mathbf{\bar{v}} + c\mathbf{B}\mathbf{u}) -\mathbf{s}, \label{eq:sdot_dpi}
\end{gather}

\noindent which has equilibrium:
\begin{equation}
        \mathbf{0} = \mathbf{f}(\mathbf{Z}\mathbf{s}^{*} + \mathbf{\bar{v}} + c\mathbf{B}\mathbf{u}^{*}) -\mathbf{s}^{*} \label{eq:eq_dpi}.
\end{equation}

Here the parameter $c \in \{-1,1\}$ indicates whether power is being consumed ($c=-1$) or injected ($c=1$).  Note that this formulation implicitly assumes that all devices participating in this control activity either inject or consume power (i.e., the sign of the elements of $\mathbf{u}$ must be the same).  The following assumptions are made regarding the forced and unforced equilibria of \eqref{eq:sdot_dpi}:

\begin{assumption}
\label{assumption:direct1}
The equilibrium of the unforced system of \eqref{eq:sdot_dpi} (Eq.\eqref{eq:eq_dpi} where $\mathbf{u}^{*} = \mathbf{0}$) is unstable.
\end{assumption}
 
\begin{assumption}
\label{assumption:direct2}
The equilibrium point of \eqref{eq:eq_dpi} is asymptotically stable.
\end{assumption}

 Assumption \ref{assumption:direct1} implies that the unforced equilibrium lies in a regime where the local Lipschitz constants of $\mathbf{f}(\mathbf{Z}\mathbf{s}^{*} + \mathbf{\bar{v}})$ violate the conditions of Proposition \ref{prop:direct}.  In light of Assumption \ref{assumption:direct2}, the term $c\mathbf{Bu}^{*}$ can be interpreted as additional active/reactive power injection/consumption that moves the system voltages to a regime where the local Lipschitz constants meet the requirements of Proposition \ref{prop:direct}.

Let $\bm{\mu} = \mathbf{u} - \mathbf{u}^{*}$.  The system of \eqref{eq:sdot_dpi} (which we will now refer to as the \emph{plant}) can then be expressed as:
\begin{gather}
    \mathbf{T}\mathbf{\dot{s}} = \mathbf{f}(\mathbf{Z}\mathbf{s} + \mathbf{\bar{v}} + c\mathbf{B}\bm{\mu} + c\mathbf{B}\mathbf{u}^{*}) -\mathbf{s} , \label{eq:sdot_dpi_mu}
\end{gather}

In order to derive an adaptive controller that will stabilize the system one may compare the unstable plant to a stable reference model.  An adaptation law for $\bm{\mu}$ that will stabilize the plant can then be determined via minimizing the error between the plant and the reference system.  For the system of \eqref{eq:sdot_dpi_mu}, consider the following reference model:
\begin{gather}
    \mathbf{T}\mathbf{\dot{s}}_{r} = \mathbf{f}(\mathbf{Z}\mathbf{s}_{r} + \mathbf{\bar{v}} + c\mathbf{B}\mathbf{u}^{*}) -\mathbf{s}_{r}. \label{eq:sdot_dpi_mu_ref}
\end{gather}

Note that the plant and reference models have the same equilibria (as $\bm{\mu}^{*} = \mathbf{0}$).  Define $\bm{\alpha} = \mathbf{Zs}_{r} + \bar{\mathbf{v}} + c\mathbf{Bu}^{*}$ and let $\mathbf{e}_{s} = \mathbf{s} - \mathbf{s}_{r}$ denote the error between the power injected by the plant and the reference models.  The error dynamics can then be expressed as:
\begin{equation}
    \mathbf{T}\dot{\mathbf{e}}_{s} = \mathbf{f}(\mathbf{Ze}_{s} + c\mathbf{B}\bm{\mu} + \bm{\alpha}) - \mathbf{f}(\bm{\alpha}) - \mathbf{e}_{s} \label{eq:es_dpi}.
\end{equation}

The following theorem establishes the adaptation law that will drive $\bm{\mu}$ to $0$, hence stabilizing \eqref{eq:es_dpi}.
\begin{theorem}
\label{thm:direct_power_inj}
Given the system of \eqref{eq:sdot_dpi_mu}, the reference model \eqref{eq:sdot_dpi_mu_ref}, and the associated error system of \eqref{eq:es_dpi}, suppose Assumptions \ref{assumption:direct1} and \ref{assumption:direct2} hold.  Additionally, suppose that there exists symmetric positive definite matrices $\mathbf{P} \in \mathcal{R}^{n\times n}$, $\mathbf{H} \in \mathcal{R}^{(l+m) \times (l+m)}$, $\mathbf{\Gamma}_{p} \in \mathcal{R}^{l \times l}$, and $\mathbf{\Gamma}_{q} \in \mathcal{R}^{m \times m}$. Define $\mathbf{\Gamma} =\text{diag}\qty(\mathbf{\Gamma}_{p},
\mathbf{\Gamma}_{q})$, $\hat{\mathbf{I}} = [\mathbf{I}_{n \times n}, \mathbf{I}_{n \times n}]^\top$ and the matrix:
\begin{gather}
    \mathbf{\Lambda} = \mathbf{PZT}^{-1}\mathbf{C_{s}}\mathbf{B} - \hat{\mathbf{I}}^\top\mathbf{B\Gamma H} \label{eq:Lambda}.
\end{gather}

Suppose now that $\mathbf{\Lambda}$ is non-positive (i.e., all elements are less than or equal to 0).  Finally, define $\mathbf{e}_{v} = \mathbf{v} - \mathbf{v}_{r}$, where $\mathbf{v}$ is defined in \eqref{eq:v}, and $\mathbf{v}_{r} = \mathbf{Zs}_{r} + \mathbf{\bar{v}}$.  Then the adaptation law:
\begin{equation}
    \dot{\bm{\mu}} = -c\mathbf{\Gamma}\mathbf{B}^\top\hat{\mathbf{I}}\begin{vmatrix}\mathbf{e}_{v}\end{vmatrix} \label{eq:mu_dot},
\end{equation}

\noindent where the absolute value is applied element-wise, asymptotically stabilizes $\mathbf{e}_{s}$.
\end{theorem}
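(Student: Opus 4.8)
The plan is a Lyapunov-based model-reference argument. Since $\mathbf{v}=\mathbf{Z}\mathbf{s}+\bar{\mathbf{v}}$ and $\mathbf{v}_r=\mathbf{Z}\mathbf{s}_r+\bar{\mathbf{v}}$, the voltage error satisfies $\mathbf{e}_v=\mathbf{Z}\mathbf{e}_s$, so I would penalize this $n$-dimensional tracking error together with the parameter error $\bm{\mu}$ through the candidate
\begin{equation}
V(\mathbf{e}_v,\bm{\mu})=\mathbf{e}_v^\top\mathbf{P}\mathbf{e}_v+\bm{\mu}^\top\mathbf{H}\bm{\mu},
\end{equation}
which is positive definite because $\mathbf{P}\succ\mathbf{0}$ and $\mathbf{H}\succ\mathbf{0}$. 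Differentiating along the error dynamics \eqref{eq:es_dpi} and using $\dot{\mathbf{e}}_v=\mathbf{Z}\dot{\mathbf{e}}_s$ gives $\dot V=2\mathbf{e}_v^\top\mathbf{P}\mathbf{Z}\mathbf{T}^{-1}\big(\mathbf{f}(\mathbf{Z}\mathbf{e}_s+c\mathbf{B}\bm{\mu}+\bm{\alpha})-\mathbf{f}(\bm{\alpha})-\mathbf{e}_s\big)+2\bm{\mu}^\top\mathbf{H}\dot{\bm{\mu}}$.

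First I would apply the mean value theorem on each linear segment of the VV/VW curves to write $\mathbf{f}(\mathbf{Z}\mathbf{e}_s+c\mathbf{B}\bm{\mu}+\bm{\alpha})-\mathbf{f}(\bm{\alpha})$ as an integrated Jacobian $-\tilde{\mathbf{D}}$ acting on the increment $\mathbf{Z}\mathbf{e}_s+c\mathbf{B}\bm{\mu}$. By Assumption \ref{assumption:f_1} (monotone decreasing) the diagonal entries of $\tilde{\mathbf{D}}$ are nonnegative, and by Assumption \ref{assumption:f_2} they are bounded above by the Lipschitz constants in $\mathbf{C}_s$, i.e. $\mathbf{0}\le\tilde{\mathbf{D}}\le\mathbf{C}_s$ entrywise. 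Substituting this decomposition splits $\dot V$ into a dissipative part and a cross part. The piece arising from $\mathbf{Z}\mathbf{e}_s$, together with the $-\mathbf{e}_s$ feedback, I would show is negative semidefinite in $\mathbf{e}_v$ (this is the tracking-error dissipation that the monotonicity of $\mathbf{f}$ guarantees); the remaining, sign-indefinite cross term between $\bm{\mu}$ and $\mathbf{e}_v$ is what the adaptation law must tame.

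The central step is to bound this cross term, and here the complete positivity of $\mathbf{Z}$ together with the nonnegativity of $\mathbf{T}^{-1}$, $\mathbf{C}_s$, and $\mathbf{B}$ is what makes the estimate tight rather than lossy. Choosing $\mathbf{P}$ with nonnegative entries, every factor multiplying $\tilde{\mathbf{D}}$ is entrywise nonnegative, so replacing $\mathbf{e}_v$ by $|\mathbf{e}_v|$ and $\tilde{\mathbf{D}}$ by its upper bound $\mathbf{C}_s$ dominates the error contribution by a term built from $\mathbf{P}\mathbf{Z}\mathbf{T}^{-1}\mathbf{C}_s\mathbf{B}$. Substituting the adaptation law \eqref{eq:mu_dot} into $2\bm{\mu}^\top\mathbf{H}\dot{\bm{\mu}}$ and using the symmetry of $\mathbf{H}$ and $\mathbf{\Gamma}$ produces a matching term built from $\hat{\mathbf{I}}^\top\mathbf{B}\mathbf{\Gamma}\mathbf{H}$; the element-wise absolute value $|\mathbf{e}_v|$ appearing in \eqref{eq:mu_dot} is precisely what is needed for the two contributions to align and collapse into the single matrix $\mathbf{\Lambda}$ of \eqref{eq:Lambda}. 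Because $\mathbf{\Lambda}$ is entrywise nonpositive while the surviving vectors are entrywise nonnegative, the combined cross term is nonpositive, and therefore $\dot V\le 0$.

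Finally I would close the argument with Barbalat's lemma. From $\dot V\le0$ and $V\ge0$ the function $V$ converges and $\mathbf{e}_v,\bm{\mu}$ remain bounded; boundedness of the trajectories renders $\dot V$ uniformly continuous, so $\dot V\to0$ and hence the negative semidefinite dissipation forces $\mathbf{e}_v\to\mathbf{0}$. Since the plant \eqref{eq:es_dpi} is a stable first-order filter whose forcing (the $\mathbf{f}$-increment) vanishes as $\mathbf{e}_v\to\mathbf{0}$, the $-\mathbf{e}_s$ term then drives $\mathbf{e}_s\to\mathbf{0}$, giving the claimed asymptotic stabilization, with $\bm{\mu}$ only guaranteed bounded as is standard for MRAC without persistent excitation. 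I expect the main obstacle to be exactly the third step: carefully tracking the factor $c\in\{-1,1\}$ and the element-wise absolute values while passing from the unknown Jacobian $\tilde{\mathbf{D}}$ to its bound $\mathbf{C}_s$, so that the dissipative and adaptation contributions combine into $\mathbf{\Lambda}$ with no uncontrolled residual --- it is this tight bookkeeping, enabled by the sign structure of $\mathbf{Z}$ and the monotonicity of $\mathbf{f}$, that turns the non-positivity of $\mathbf{\Lambda}$ into the exact condition for $\dot V\le0$.
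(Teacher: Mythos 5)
Your overall architecture matches the paper's: the same Lyapunov function $V = \tfrac{1}{2}\big((\mathbf{Z}\mathbf{e}_s)^\top\mathbf{P}\mathbf{Z}\mathbf{e}_s + \bm{\mu}^\top\mathbf{H}\bm{\mu}\big)$, and the same use of the sign structure of $c\bm{\mu}$ together with non-positivity of $\mathbf{\Lambda}$ to cancel the adaptation cross term. But there is a genuine gap at what you yourself identify as the central step. You claim the tracking-dissipation part --- the contribution of $\mathbf{Z}\mathbf{e}_s$ plus the $-\mathbf{e}_s$ feedback --- is negative semidefinite and that ``the monotonicity of $\mathbf{f}$ guarantees'' this. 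It does not, and it cannot: the unforced plant is the very same monotone $\mathbf{f}$ in feedback with the very same network, and Assumption \ref{assumption:direct1} stipulates that that system is \emph{unstable}. If monotonicity alone made this term negative semidefinite, no choice of steep VV/VW slopes could ever destabilize the grid, contradicting the premise of the entire paper. Concretely, after your mean-value decomposition the dissipative part reads $-2\mathbf{e}_v^\top\mathbf{P}\mathbf{Z}\mathbf{T}^{-1}\tilde{\mathbf{D}}\mathbf{e}_v - 2\mathbf{e}_s^\top\mathbf{Z}^\top\mathbf{P}\mathbf{Z}\mathbf{T}^{-1}\mathbf{e}_s$; neither $\mathbf{P}\mathbf{Z}\mathbf{T}^{-1}\tilde{\mathbf{D}}$ nor $\mathbf{Z}^\top\mathbf{P}\mathbf{Z}\mathbf{T}^{-1}$ is symmetric, and the symmetric part of such products can be indefinite precisely when the slopes in $\tilde{\mathbf{D}}$ are large relative to $\mathbf{Z}$ and $\mathbf{T}^{-1}$ --- this is exactly the instability mechanism. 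The missing ingredient is Assumption \ref{assumption:direct2}, which your proof never invokes: the paper bounds both the Lipschitz term and the quadratic term by norms and then appeals to Proposition \ref{prop:error} in Appendix \ref{subsec:appendix_proof}, which rests on the stability margin $-\underline{\lambda} + \overline{\lambda}\,\|\mathbf{C}_s\mathbf{Z}\|_2 \le 0$ of Proposition \ref{prop:direct} holding for the (stable, forced) reference system, to conclude $\dot{V} \le -\sigma\,\|\mathbf{e}_s\|_2^2$. A hypothesis of the theorem going entirely unused is the tell-tale sign that the argument, as structured, cannot close.

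A secondary issue is your endgame. Barbalat gives you $\mathbf{e}_v \to \mathbf{0}$, but since $\mathbf{Z} \in \mathcal{R}^{n \times 2n}$ has a nontrivial nullspace, $\mathbf{Z}\mathbf{e}_s \to \mathbf{0}$ does not by itself force $\mathbf{e}_s \to \mathbf{0}$; and the residual forcing $\mathbf{f}(c\mathbf{B}\bm{\mu} + \bm{\alpha}) - \mathbf{f}(\bm{\alpha})$ in the cascade step need not vanish when $\bm{\mu}$ is only bounded rather than convergent, so ``the forcing vanishes'' is not justified. The paper avoids both problems by obtaining dissipation directly in $\|\mathbf{e}_s\|_2^2$ (not $\|\mathbf{e}_v\|_2^2$) and closing with La Salle's invariance principle, on whose invariant set $\mathbf{e}_s \equiv \mathbf{0}$ forces $\mathbf{f}(c\mathbf{B}\bm{\mu} + \bm{\alpha}) = \mathbf{f}(\bm{\alpha})$ and hence $\bm{\mu} = \mathbf{0}$.
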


\begin{proof}
The result can be proven via Lyapunov analysis.  Clearly, $\mathbf{e}_{v} = \mathbf{Z}\mathbf{e}_{s}$ and $\dot{\mathbf{e}}_{v} = \mathbf{Z}\dot{\mathbf{e}}_{s}$. Consider the Lyapunov function:
\begin{equation}
    V = \frac{1}{2}\Big( \qty(\mathbf{Z}\mathbf{e}_{s})^\top\mathbf{P}\mathbf{Z}\mathbf{e}_{s} + \bm{\mu}^\top\mathbf{H}\bm{\mu}\Big) \label{eq:Ve}.
\end{equation}

Let $\mathbf{M} = \mathbf{Z}^\top\mathbf{PZ}$. The time derivative of \eqref{eq:Ve} can be expressed as:
\begin{subequations}
\begin{align}
    \dot{V} &= \mathbf{e}_{s}^\top\mathbf{MT}^{-1}\Big(\mathbf{f}(\mathbf{Z}\mathbf{e}_{s} + c\mathbf{B}\bm{\mu} + \bm{\alpha}) - \mathbf{f}( \bm{\alpha})\Big) \nonumber \\
        &- \frac{1}{2}\mathbf{e}_{s}^\top\Big(\mathbf{T}^{-1}\mathbf{M} + \mathbf{MT}^{-1} \Big) \mathbf{e}_{s}
         - c\bm{\mu}^\top\mathbf{H}\mathbf{\Gamma}\mathbf{B}^\top\hat{\mathbf{I}}\begin{vmatrix}\mathbf{e}_{v}\end{vmatrix} \label{eq:Ve_dot}\\
    &\le \begin{vmatrix}\mathbf{Ze}_{s}\end{vmatrix}^\top\mathbf{PZT}^{-1}\begin{vmatrix}\mathbf{f}(\mathbf{Z}\mathbf{e}_{s} + c\mathbf{B}\bm{\mu} + \mathbf{\alpha}) - \mathbf{f}( \mathbf{\alpha})\end{vmatrix} \nonumber \\
    &- \frac{1}{2}\mathbf{e}_{s}^\top\Big(\mathbf{T}^{-1}\mathbf{M} + \mathbf{MT}^{-1} \Big) \mathbf{e}_{s} - c\bm{\mu}^\top\mathbf{H}\mathbf{\Gamma}\mathbf{B}^\top\begin{vmatrix}\mathbf{e}_{v}\end{vmatrix} \label{eq:Ve_dot2}\\
    &\le \begin{vmatrix}\mathbf{Ze}_{s}\end{vmatrix}^\top\mathbf{PZT}^{-1}\begin{vmatrix}\mathbf{C}_{s}\mathbf{Ze}_{s} + c\mathbf{C}_{s}\mathbf{B}\bm{\mu}\end{vmatrix} \nonumber \\
    &- \frac{1}{2}\mathbf{e}_{s}^\top\Big(\mathbf{T}^{-1}\mathbf{M} + \mathbf{MT}^{-1} \Big) \mathbf{e}_{s} - c\bm{\mu}^\top\mathbf{H}\mathbf{\Gamma}\mathbf{B}^\top\begin{vmatrix}\mathbf{e}_{v}\end{vmatrix} \label{eq:Ve_dot3}.
\end{align}
\end{subequations}

\noindent where the absolute value is applied element wise in \eqref{eq:Ve_dot} - \eqref{eq:Ve_dot3}.  Recall that the scalar parameter $c$ encodes the decision to either inject power ($c > 0$) or consume power ($c < 0$).  The nature of $c$ coupled with the definition of $\bm{\mu}$ implies that the product $c\bm{\mu}$ is always vector consisting of positive elements (i.e., if $c>0 \implies \bm{\mu} > 0$ and $c<0 \implies \bm{\mu} < 0$).  This, in conjunction with the conditions surrounding \eqref{eq:Lambda}, imply:
\begin{subequations}
\begin{align}
    &\begin{vmatrix}\mathbf{Ze}_{s}\end{vmatrix}^\top\mathbf{PZT}^{-1}\begin{vmatrix}c\mathbf{C}_{s}\mathbf{B}\bm{\mu}\end{vmatrix} - c\bm{\mu}^\top\mathbf{H}\mathbf{\Gamma}\mathbf{B}^\top\begin{vmatrix}\mathbf{e}_{v}\end{vmatrix} = \label{eq:Ve_dot_mu} \\
    &~~~~~~\begin{vmatrix}\mathbf{e}_{v}\end{vmatrix}^\top\mathbf{PZT}^{-1}c\mathbf{C}_{s}\mathbf{B}\bm{\mu} - c\bm{B}^\top\mathbf{H}\mathbf{\Gamma}\mathbf{B}^\top\begin{vmatrix}\mathbf{e}_{v}\end{vmatrix} \le 0, \label{eq:Ve_dot_mu2}
\end{align}
\end{subequations}

\noindent where \eqref{eq:Ve_dot_mu2} follows from the fact that $\mathbf{e}_{v} = \mathbf{Ze}_{s}$ and $c\mathbf{C}_{s}\mathbf{B}\bm{\mu}$ is a vector with non-negative elements.  Eqs. \eqref{eq:Ve_dot_mu} - \eqref{eq:Ve_dot_mu2} allow $\dot{V}$ to be further simplified:
\begin{subequations}
\begin{align}
    \dot{V} &\le \begin{vmatrix}\mathbf{Ze}_{s}\end{vmatrix}^\top\mathbf{PZT}^{-1}\begin{vmatrix}\mathbf{C}_{s}\mathbf{Ze}_{s} \end{vmatrix} \nonumber \\
    &- \frac{1}{2}\mathbf{e}_{s}^\top\Big(\mathbf{T}^{-1}\mathbf{M} + \mathbf{MT}^{-1} \Big) \mathbf{e}_{s} \label{eq:Ve_dot4} \\
    &\le \begin{Vmatrix}
            \mathbf{e}_{s}
        \end{Vmatrix}_{2} \cdot \begin{Vmatrix}
            \mathbf{MT}^{-1}
            \end{Vmatrix}_{2} \cdot \begin{Vmatrix}
            \mathbf{C}_{s}\mathbf{Z}\mathbf{e}_{s}
        \end{Vmatrix}_{2} \nonumber \\
        & -\begin{Vmatrix}
            \mathbf{MT}^{-1}
            \end{Vmatrix}_{2} \cdot \begin{Vmatrix}
            \mathbf{e}_{s}
        \end{Vmatrix}_{2}^{2}. \label{eq:Ve_dot5}
\end{align}
\end{subequations}

At this stage, note that \eqref{eq:Ve_dot5} is equivalent to an upper bound on the the derivative of a Lyapunov function for an unforced version of \eqref{eq:es_dpi} (the bound being \eqref{eq:app_ve_dot3}), which is described in Appendix  \ref{subsec:appendix_proof}, where $\mathbf{M} = \mathbf{Z}^\top\mathbf{PZ}$.  Proposition \ref{prop:error} in Appendix \ref{subsec:appendix_proof} implies \eqref{eq:Ve_dot5} can be further simplified to $\dot{V} \le -\sigma \norm{\mathbf{e}_{s}}_{2}^{2} \le \mathbf{0}$, where $\sigma$ is a positive constant. 

 La Salle's Invariance Principle \cite{Khalil:1173048} can now be applied to show asymptotic stability of $\bm{\mu}$. 
 As the equilibrium of \eqref{eq:es_dpi} and \eqref{eq:mu_dot} is the origin, it is straightforward to verify that trajectories confined to the set $S = \{\mathbf{e}_{s} : \norm{\mathbf{e}_{s}}_{2}^{2} = \mathbf{0}\}$ imply $\bm{\mu} = \mathbf{0}$.  Therefore, it can be concluded that $\bm{\mu}$ is asymptotically stable. 
\end{proof}

Asymptotic stability of \eqref{eq:es_dpi} implies that $\mathbf{s} \rightarrow \mathbf{s}_{r}$ as $t \rightarrow \infty$.  The adaptation law of \eqref{eq:mu_dot} implies that $\mathbf{u}$ will stop changing when the voltages observed by the plant and the reference are equivalent.  This will occur once $\mathbf{u}$ has moved the measured voltage of the plant into a regime where the unstable VV/VW curves are generating constant power with respect to changing voltages.  

As a consequence of \eqref{eq:Lambda}, satisfaction of Theorem \ref{thm:direct_power_inj} necessitates the construction of matrices $\mathbf{P}$, $\mathbf{H}$, $\mathbf{\Gamma}_{p}$, and $\mathbf{\Gamma}_{q}$ to ensure $\mathbf{\Lambda}$ is non-positive.  Choices of $\mathbf{\Gamma}_{p}$ and $\mathbf{\Gamma}_{q}$ have important implications for the deployment of the scheme (e.g., whether the approach can be implemented locally and without communication between controllers).  To obtain a local and communication-free implementation, it is necessary to choose $\mathbf{\Gamma}_{p}$ and $\mathbf{\Gamma}_{q}$ to be diagonal.  However, as \eqref{eq:Lambda} is a function of the network topology and impedances ($\mathbf{Z}$), the Lipshitz constants of VV/VW controllers $\mathbf{C}_{s}$, and the nodal location of controllable direct power injection devices ($\mathbf{B}$), certain deployments of controllers in some networks may prevent construction of a non-positive $\mathbf{\Lambda}$.  In the situation where controllable direct power injection devices exist at every node in $\mathcal{G}$ (in this case $\mathbf{B} = \mathbf{I}_{2n \times 2n}$) then non-positivity of $\mathbf{\Lambda}$ can be guaranteed as $\mathbf{H}$ can be chosen to be non-negative, symmetric, positive definite, and sufficiently large.

%%%%%%%%%%%%%%%%%%%%%%%%%%%%%%%%%%%%%%%%%%%%%%%%%%%%%%%%%%%%%
\subsection{Adaptive Voltage Bias}
\label{subsec:bias}

In this section, we consider an alternative strategy to adaptive power injection/consumption whereby the system of \eqref{eq:sdot} - \eqref{eq:v} is stabilized via introducing a bias term into the measured voltage to \textquote{trick} non-compromised smart inverters into operating in a stable region where the Lipschitz constants of the VV and VW controllers meet the conditions of Proposition \ref{prop:direct}.  Biasing the voltage signal input to the VV/VW controllers is accomplished via adding a term to the voltage measured directly from the grid.  This strategy is equivalent to translating the VV/VW curves along their voltage axis, similar to \cite{singhal2019real}.  As the adaptation law only adjusts the input to the VV/VW control functions, the strategy is compliant with emerging DER standards for PV systems \cite{IEEE_1547, rule21, inverter2016}.  

To begin, consider $l$ photovoltaic smart inverters equipped with adaptive voltage bias control located at a subset of nodes in $\mathcal{G}$ (e.g., $l \le n$).  Let $\mathbf{w} \in \mathcal{R}^{l\times 1}$ denote the collection of voltage offsets, or biases, that will be added to the voltage measurements input into VV/VW functions associated with these controllers.  The sparse matrix $\mathbf{D} \in \mathcal{R}^{n\times l}$ will determine which bias terms in $\mathbf{w}$ will be added to inputs of VV/VW controllers at nodal locations in $\mathcal{G}$.  $\mathbf{D}(i,j) = 1$ indicates that the adaptive bias at entry $j$ in $\mathbf{w}$ will be added to VV/VW functions at node $i$ in $\mathcal{G}$.  As such each row and column of $\mathbf{D}$ consists of all 0s with at most a single entry equal to 1. With these definitions, the effect of voltage biasing on the dynamics of \eqref{eq:sdot} - \eqref{eq:v} can be expressed as:
\begin{gather}
    \mathbf{T}\mathbf{\dot{s}} = \mathbf{f}(\mathbf{Z}\mathbf{s} + \mathbf{\bar{v}}  +d\mathbf{D}\mathbf{w}) -\mathbf{s}, \label{eq:sdot_bias}
\end{gather}

\noindent which has equilibrium:
\begin{equation}
        \mathbf{0} = \mathbf{f}(\mathbf{Z}\mathbf{s}^{*} + \mathbf{\bar{v}} + d\mathbf{D}\mathbf{w}^{*}) -\mathbf{s}^{*} \label{eq:eq_bias}.
\end{equation}

Here the parameter $d \in \{-1,1\}$ indicates if the bias will be positive ($d=1$) or negative ($d=-1$) and $\mathbf{w} \in \mathbb{R}^{n \times 1}$ (without loss of generality) is the vector of bias magnitudes.  Note that this formulation implicitly assumes that all devices participating in this control activity will bias the measured voltages in the same direction.  The following assumptions are made regarding the forced and unforced equilibria of \eqref{eq:sdot_dpi}:

\begin{assumption}
\label{assumption:bias1}
The equilibrium of the unforced system of \eqref{eq:sdot_bias} (Eq.\eqref{eq:eq_bias} where $\mathbf{w}^{*} = \mathbf{0}$) is unstable.
\end{assumption}
 
\begin{assumption}
\label{assumption:bias2}
The equilibrium point of \eqref{eq:eq_bias} is asymptotically stable.
\end{assumption}

Assumption \ref{assumption:bias1} implies that the unforced equilibrium lies in a regime where the local Lipschitz constants of $\mathbf{f}(\mathbf{Z}\mathbf{s}^{*} + \mathbf{\bar{v}})$ violate the conditions of Proposition \ref{prop:direct}.  In light of Assumption \ref{assumption:bias2}, the term $d\mathbf{Du}^{*}$ can be interpreted as a translation of the VV/VW curves along the voltage axis that provide additional active/reactive power injection/consumption that moves the system voltages to a regime where the local Lipschitz constants meet the requirements of Proposition \ref{prop:direct}.  

Let $\bm{\phi} = \mathbf{w} - \mathbf{w}^{*}$. The system of \eqref{eq:sdot_bias} (henceforth refer to as the \emph{plant}) can then be expressed as:
\begin{gather}
    \mathbf{T}\mathbf{\dot{s}} = \mathbf{f}(\mathbf{Z}\mathbf{s} + \mathbf{\bar{v}}  + d\mathbf{D}\bm{\phi} + d\mathbf{D}\mathbf{w}^{*}) -\mathbf{s}. \label{eq:sdot_bias_phi}
\end{gather}

For this system, consider the following reference model:
\begin{gather}
    \mathbf{T}\mathbf{\dot{s}}_{r} = \mathbf{f}(\mathbf{Z}\mathbf{s}_{r} + \mathbf{\bar{v}} + d\mathbf{D}\mathbf{w}^{*}) -\mathbf{s}_{r}. \label{eq:sdot_bias_phi_ref}
\end{gather}

Note that the plant and reference models have the same equilibria (as $\bm{\phi}^{*} = \mathbf{0}$).  Define $\bm{\alpha} = \mathbf{Zs}_{r} + \bar{\mathbf{v}} + d\mathbf{Du}^{*}$ and let $\mathbf{e}_{s} = \mathbf{s} - \mathbf{s}_{r}$ denote the error between the power injected by the plant and the reference models.  The error dynamics can then be expressed as:
\begin{equation}
    \mathbf{T}\dot{\mathbf{e}}_{s} = \mathbf{f}(\mathbf{Ze}_{s} + d\mathbf{D}\bm{\phi} + \bm{\alpha}) - \mathbf{f}(\bm{\alpha}) - \mathbf{e}_{s} \label{eq:es_bias}.
\end{equation}

The following theorem establishes the adaptation law that will drive $\bm{\phi}$ to $0$, hence stabilizing \eqref{eq:es_bias}.
\begin{theorem}
\label{thm:bias}
Given the system of \eqref{eq:sdot_bias_phi}, the reference model \eqref{eq:sdot_bias_phi_ref}, and the associated error system of \eqref{eq:es_bias}, suppose Assumptions \ref{assumption:bias1} and \ref{assumption:bias2} hold.  Additionally, suppose that there exists symmetric positive definite matrices $\mathbf{P} \in \mathcal{R}^{n\times n}$, $\mathbf{H} \in \mathcal{R}^{l \times l}$, and $\mathbf{\Gamma}_{v} \in \mathcal{R}^{l \times l}$.  Define the matrix:
\begin{gather}
    \mathbf{\Theta} = \mathbf{PZT}^{-1}\mathbf{C}_{s}\mathbf{D} - \mathbf{D\Gamma}_{v} \mathbf{H} \label{eq:Theta}.
\end{gather}

Suppose now that $\mathbf{\Theta}$ is non-positive (i.e., all elements are less than or equal to 0).  Finally, define $\mathbf{e}_{v} = \mathbf{v} - \mathbf{v}_{r}$, where $\mathbf{v}$ is defined in \eqref{eq:v}, and $\mathbf{v}_{r} = \mathbf{Zs}_{r} + \mathbf{\bar{v}}$.  Then the adaptation law:
\begin{equation}
    \dot{\bm{\phi}} = -d\mathbf{\Gamma}_{v}\mathbf{D}^\top\begin{vmatrix}\mathbf{e}_{v}\end{vmatrix} \label{eq:phi_dot},
\end{equation}

\noindent where the absolute value is applied element-wise, asymptotically stabilizes $\mathbf{e}_{s}$.
\end{theorem}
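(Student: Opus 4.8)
The plan is to prove Theorem~\ref{thm:bias} by replicating, almost verbatim, the Lyapunov argument given for Theorem~\ref{thm:direct_power_inj}. The voltage-bias plant \eqref{eq:sdot_bias_phi}, its reference model \eqref{eq:sdot_bias_phi_ref}, and the error system \eqref{eq:es_bias} are structurally identical to \eqref{eq:sdot_dpi_mu}, \eqref{eq:sdot_dpi_mu_ref}, and \eqref{eq:es_dpi}, with $(\bm{\phi},\mathbf{D},\mathbf{\Gamma}_{v},d)$ taking the roles of $(\bm{\mu},\mathbf{B},\mathbf{\Gamma},c)$. The one genuine structural simplification is that the bias $d\mathbf{D}\bm{\phi}$ enters the $\mathcal{R}^{n}$-valued voltage argument of $\mathbf{f}$ directly, rather than through the $2n$-dimensional power map; consequently the matrix $\hat{\mathbf{I}}$ that reconciles the $n$- and $2n$-dimensional spaces in \eqref{eq:Lambda} does not appear, and $\mathbf{\Theta}$ in \eqref{eq:Theta} is the exact analogue of $\mathbf{\Lambda}$ with $\hat{\mathbf{I}}$ removed.

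First I would take the candidate Lyapunov function $V=\frac{1}{2}\big((\mathbf{Z}\mathbf{e}_{s})^\top\mathbf{P}\mathbf{Z}\mathbf{e}_{s}+\bm{\phi}^\top\mathbf{H}\bm{\phi}\big)$, i.e. \eqref{eq:Ve} with $\bm{\phi}$ in place of $\bm{\mu}$, and set $\mathbf{M}=\mathbf{Z}^\top\mathbf{P}\mathbf{Z}$. Differentiating along \eqref{eq:es_bias} and substituting the adaptation law \eqref{eq:phi_dot} for $\dot{\bm{\phi}}$ reproduces the analogue of \eqref{eq:Ve_dot}. I would then bound the drift term using Assumption~\ref{assumption:f_2}, namely $|\mathbf{f}(\mathbf{Z}\mathbf{e}_{s}+d\mathbf{D}\bm{\phi}+\bm{\alpha})-\mathbf{f}(\bm{\alpha})|\le\mathbf{C}_{s}|\mathbf{Z}\mathbf{e}_{s}+d\mathbf{D}\bm{\phi}|\le\mathbf{C}_{s}(|\mathbf{Z}\mathbf{e}_{s}|+|d\mathbf{D}\bm{\phi}|)$, where element-wise non-negativity of $\mathbf{C}_{s}$ preserves the inequalities, yielding the analogue of \eqref{eq:Ve_dot3}.

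The crux of the argument, and the step I expect to demand the most care, is showing that the $\bm{\phi}$-coupled contribution is non-positive. Mirroring the direct-injection case, I would invoke Assumption~\ref{assumption:bias2} and the definition $\bm{\phi}=\mathbf{w}-\mathbf{w}^{*}$ to conclude that $d\bm{\phi}$ has non-negative entries, so that, since $\mathbf{C}_{s}$ and $\mathbf{D}$ are non-negative, $|d\mathbf{C}_{s}\mathbf{D}\bm{\phi}|=\mathbf{C}_{s}\mathbf{D}(d\bm{\phi})$. Using $\mathbf{e}_{v}=\mathbf{Z}\mathbf{e}_{s}$, taking a scalar transpose of the adaptation term, and exploiting the symmetry of $\mathbf{H}$ and $\mathbf{\Gamma}_{v}$, the coupled contribution collapses to $|\mathbf{e}_{v}|^\top\mathbf{\Theta}\,(d\bm{\phi})$, which is $\le0$ because $|\mathbf{e}_{v}|\ge\mathbf{0}$, $d\bm{\phi}\ge\mathbf{0}$, and $\mathbf{\Theta}$ is non-positive by hypothesis. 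The delicacy here is propagating the element-wise absolute values and the sign of $d$ consistently: the Lipschitz inequality only controls magnitudes, and it is the monotonicity of Assumption~\ref{assumption:f_1} that supplies the sign alignment making the \eqref{eq:phi_dot} term subtract from, rather than add to, the drift.

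Once the coupled term is dropped, $\dot{V}$ reduces to a bound depending on $\mathbf{e}_{s}$ alone that coincides with \eqref{eq:Ve_dot5}, i.e. the bound \eqref{eq:app_ve_dot3} for the unforced error system with $\mathbf{M}=\mathbf{Z}^\top\mathbf{P}\mathbf{Z}$. Assumption~\ref{assumption:bias2} guarantees that the reference equilibrium satisfies the stability criterion of Proposition~\ref{prop:direct}, so Proposition~\ref{prop:error} applies and gives $\dot{V}\le-\sigma\norm{\mathbf{e}_{s}}_{2}^{2}\le0$ for some $\sigma>0$. I would finish with La~Salle's Invariance Principle: on the set $\{\mathbf{e}_{s}:\norm{\mathbf{e}_{s}}_{2}=0\}$ the error system \eqref{eq:es_bias} together with the law \eqref{eq:phi_dot} forces $\bm{\phi}=\mathbf{0}$, so $\mathbf{e}_{s}$ is asymptotically stable and $\mathbf{s}\to\mathbf{s}_{r}$.
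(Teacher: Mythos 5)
Your proposal is correct and matches the paper's approach exactly: the paper's own proof of Theorem~\ref{thm:bias} simply takes the Lyapunov function \eqref{eq:Ve_bias} and states that the argument proceeds identically to Theorem~\ref{thm:direct_power_inj} with $c\mathbf{B}\bm{\mu}$ replaced by $d\mathbf{D}\bm{\phi}$, which is precisely the substitution you carry out, including the non-positivity of $\mathbf{\Theta}$ playing the role of $\mathbf{\Lambda}$, the reduction to the unforced bound of Proposition~\ref{prop:error}, and the concluding La~Salle argument. Your spelled-out version is, if anything, more explicit than the paper's one-line reduction.
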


\begin{proof}
The result can be proven via Lyapunov analysis.  Note that, $\mathbf{e}_{v} = \mathbf{Z}\mathbf{e}_{s}$ and $\dot{\mathbf{e}}_{v} = \mathbf{Z}\dot{\mathbf{e}}_{s}$. Consider the Lyapunov function:
\begin{equation}
    V = \frac{1}{2}\Big( \qty(\mathbf{Z}\mathbf{e}_{s})^\top\mathbf{P}\mathbf{Z}\mathbf{e}_{s} + \bm{\phi}^\top\mathbf{H}\bm{\phi}\Big) \label{eq:Ve_bias}.
\end{equation}

\bhl{Noting the similar structure of \eqref{eq:es_bias} and \eqref{eq:es_dpi}, the remainder of the proof proceeds identically to that of Theorem \ref{thm:direct_power_inj} (where $c\mathbf{B}\bm{\mu}$ is replaced with $d\mathbf{D}\bm{\phi}$).}
\end{proof}

Asymptotic stability of \eqref{eq:es_bias} implies that $\mathbf{s} \rightarrow \mathbf{s}_{r}$ as $t \rightarrow \infty$.  The adaptation law of \eqref{eq:phi_dot} implies that $\mathbf{w}$ will stop changing when the voltages observed by the plant and the reference are equivalent.  This will occur once $\mathbf{w}$ has moved the measured voltage of the plant into a regime where the unstable VV/VW curves are generating constant power with respect to changing voltages.  

Similarly to \eqref{eq:Lambda}, as a consequence of \eqref{eq:Theta}, satisfaction of Theorem \ref{thm:bias} necessitates the construction of matrices $\mathbf{P}$, $\mathbf{H}$, and $\mathbf{\Gamma}_{v}$ to ensure $\mathbf{\Theta}$ is non-positive.  The choice of $\mathbf{\Gamma}_{v}$ again has important implications for the deployment of the scheme (e.g., whether the approach can be implemented locally and without communication between controllers).  To obtain a local and communication-free implementation, it is necessary to choose $\mathbf{\Gamma}_{v}$ to be diagonal.  However, as \eqref{eq:Theta} is a function of the network topology and impedances ($\mathbf{Z}$), the Lipshitz constants of VV/VW controllers $\mathbf{C}_{s}$, and the nodal location of VV/VW controllers with adaptive voltage bias control ($\mathbf{D}$), certain deployments of controllers in some networks may prevent construction of a non-positive $\mathbf{\Theta}$.  In the situation where inverter resources at all nodes in $\mathcal{G}$ are executing voltage bias control (in this case $\mathbf{D} = \mathbf{I}_{n \times n}$) then non-positivity of $\mathbf{\Theta}$ can be guaranteed as $\mathbf{H}$ can be chosen to be non-negative, symmetric, positive definite, and sufficiently large.
\subsection{Implementation}
\label{subsec:implementation}

As previously mentioned in Sections \ref{subsec:direct_power_injection} - \ref{subsec:bias}, if $\mathbf{\Gamma}_{p}$, $\mathbf{\Gamma}_{q}$, and $\mathbf{\Gamma}_{v}$ are chosen to be positive definite and diagonal then the adaptation laws of \eqref{eq:mu_dot} and \eqref{eq:phi_dot} admit a local implementation, where individual elements of $\mathbf{u}_{p}$, $\mathbf{u}_{q}$, and $\mathbf{w}$ rely only on a single element of $\mathbf{e}_{v}$ which corresponds to the node at which the adaptive power injection or adaptive bias controller is located within $\mathcal{G}$.  In other words $u_{p,i} = -d\gamma_{p,i}\vert e_{v,i} \vert$, for example.  

While the implementation is local and communications free, the individual adaptive power injection and adaptive bias controllers still require the signal $e_{v,i}$, which is the error between node $i$ voltage magnitude and a voltage reference, given by $v_{r,i}$.  This reference voltage is fictitious and should capture the behavior of the voltage at node $i$ in the absence of smart inverter-driven instabilities.  As these instabilities manifest as large oscillations, an extremely effective proxy for the reference voltage is the low pass filtered node $i$ voltage magnitude.  Proper low pass filtering of $v_{i}$ can remove all traces of the instability resulting in a stable signal.  Noting that $\mathbf{v}$ and the signal produced by low-pass filtering of $\mathbf{v}$ have the same equilibrium values, the error signal $\mathbf{e}_{v}$ will become approximately zero when $\mathbf{v}$ stabilizes.

Let $\xi_{i}$ denote the low pass filtered voltage magnitude at node $i$.  Control laws for both adaptive power injection (active and reactive) and adaptive voltage bias for an arbitrary node $i \in \mathcal{G}$ are:
\begin{subequations}
    \begin{align}
        \dot{\xi}_{i} &= \tau_i \big( v_{i} - \xi_{i} \big) \label{eq:xi_i}\\
        \dot{u}_{p,i} &= -c_{i}\gamma_{p,i}\lvert v_{i} - \xi_{i} \rvert \mathbf{1}^{\epsilon}_{i}\label{eq:upi}\\
        \dot{u}_{q,i} &= -c_{i}\gamma_{q,i}\lvert v_{i} - \xi_{i} \rvert \mathbf{1}^{\epsilon}_{i}\label{eq:uqi}\\
        \dot{w}_{i} &= -d_{i}\gamma_{v,i}\lvert v_{i} - \xi_{i} \rvert \mathbf{1}^{\epsilon}_{i}\label{eq:wi}\\
        \mathbf{1}^{\epsilon}_{i} &= \begin{cases}
                                1 \quad \text{if} \quad \lvert v_{i} - \xi_{i} \rvert > \epsilon_{i}\\
                                0 \quad \text{else}
                            \end{cases}
\end{align}
\end{subequations}

\noindent where $\mathbf{1}^{\epsilon}_{i}$ is an indicator function that will prevent adaptation for small differences between $v_{i}$ and $\xi_{i}$.  The parameters $\tau_i$, $\gamma_{p,i}$, $\gamma_{q,i}$, and $\gamma_{v,i}$ are positive scalars.  Note that the adaptation laws for adaptive power injection for both active and reactive power and the adaptive voltage bias differ only by choice of their respective scaling parameters: $\gamma_{p,i}$, $\gamma_{q,i}$, and $\gamma_{v,i}$.

The low pass filter time constant, $\tau_{i}$, can be easily determined in practice using knowledge of the timestep of the smart inverter VV/VW control loops or by observing the frequency of the unstable oscillations in system voltages. As the parameter $c_{i}$ dictates if active/reactive power is to be consumed or injected for adaptive power injection controllers and $d_{i}$ determines the sign of the voltage offset for adaptive voltage bias controllers, we recommend choosing $c$ and $d$ according to a simple heuristic that can be implemented in a distributed fashion.  Choosing $c=-1$ and $d=1$ when the nodal voltage is greater than a pre-defined threshold ($v_{\text{crit}} = 1.0$ p.u. for instance) and $c=1$ and $d=-1$ when the nodal voltage is below the threshold will ensure that the adaptation laws will lower voltage magnitudes if $v_{i} > v_{\text{crit}}$ and will raise voltage magnitudes if $v_{i} < v_{\text{crit}}$.

\bhl{The proposed adaptive control scheme is designed to function as a supervisory controller operating on a slower timescale than other fast acting power electronics-based controllers in the system, e.g. STATCOM and SVR. This ensures that, in the presence of these devices, there are no adverse interactions nor do the proposed controllers impede their operation. Instead, the proposed approach serves as an additional layer of control that would help mitigate oscillatory instabilities in the absence of these devices, or should these devices have insufficient controllabiltiy.}

\section{Simulation Results}
\label{sec:results}
Simulation experiments were conducted on the IEEE $37$ and the IEEE $8500$ test feeders to verify the performance of the adaptive control scheme in mitigating smart inverter-driven voltage instabilities in \emph{three-phase unbalanced} systems.  Let $\mathcal{M}$ denote the set of smart inverters in the feeder.  In these experiments, a subset of smart inverters with VV/VW capability were issued new VV/VW curves steep enough to violate the conditions of Proposition \ref{prop:direct} and create an instability in the form of large oscillations in system voltage magnitudes.  Denote this set as $\mathcal{M}_{u} \subseteq \mathcal{M}$, where the subscript $u$ refers to ``unstable".  The goal of the adaptive bias control was to dynamically adjust the voltage input into the VV/VW functions of inverters in the set $\mathcal{M}_{s} \subseteq \mathcal{M}$.  Adaptive power injection control, instead, did not utilize inverters in $\mathcal{M}_{s}$, but rather increased reactive power consumption at nodes in $\mathcal{M}_{s}$ to stabilize system voltages.

Both the adaptive voltage bias and adaptive power injection control strategies were separately tested in two different scenarios.  \bhl{In \textbf{Scenario 1} compromised smart inverters and non-compromised smart inverters with adaptive controllers are co-located at the same nodes in the feeder, or $\mathcal{M}_{s} = \mathcal{M}_{u}$, (see Fig. \ref{fig:ieee37_uniform}).  In \textbf{Scenario 2} destabilizing smart inverters and adaptive controllers are placed at different nodes in the feeder, or $\mathcal{M}_{s} \ne \mathcal{M}_{u}$ (see Fig. \ref{fig:ieee37_diff}). Both figures depict the fraction of smart inverters responsible for creating instabilities in red (i.e., $\mathcal{M}_{u}$) and the remaining portion of smart inverters which can be utilized for adaptive voltage bias control in green (i.e., $\mathcal{M}_{s}$).}  In all experiments, smart inverters had a peak active power generation of 100\% of the nominal load with an additional 10\% inverter over-sizing for reactive power headroom. \bhl{This oversizing is required to meet anticipated reactive power capabilities at maximum active power output and is consistent with a recent study demonstrating the benefits for feeder hosting capacity\cite{nyserda}. An alternative implementation to provide this reactive capability is by curtailing active power generation when necessary to reserve headroom. The analysis presented in this work is equally valid for both implementations.} Simulations were conducted in OpenDSS with a timestep of 1 second. \bhl{As described in Section \ref{sec:control_design}, the proposed controller is intended to operate on a slower timescale with respect to other fact-acting power electronic devices that may be on the network. This assumption allows the use of a quasi steady-state (QSS) approximate model when carrying out simulations.}

\bhl{
We utilize an intuitive filtering process to extract the \textquote{energy} associated with observed voltage oscillations.  The filter consists of the series connection of a high-pass filter $H_{HP}$, a signal square element (with positive gain $c$), and a low-pass filter $H_{LP}$, shown in Fig. \ref{fig:observer}.  The output of the filter $y_{i}$ is a non-negative value which becomes larger as the amplitude of the oscillations in node $i$ voltage ($v_{i}$) increase.  For proper operation, the high and low-pass filter critical frequencies should be chosen as to not attenuate oscillations resulting from cyber-attacked inverters.
\begin{figure}[ht]
\centering
\includegraphics[width=1\columnwidth]{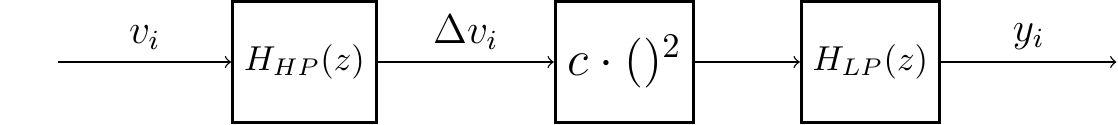}
% \resizebox {21pc} {!} { \input{tikz/observer_block_diagram.tex} }
\caption{Block diagram of illustrating the filtering process used to compute a measurement of the intensity of voltage oscillations.}
\label{fig:observer}
\end{figure}
}

\bhl{
We now present simulation results highlighting the performance of the adaptive control scheme on the IEEE 37 node feeder (Scenario 1 and Scenario 2) and the IEEE 8500 test feeder (Scenario 1). Note that we do not explicitly compare the performance of our controller to the approach undertaken by Singhal et. al. \cite{singhal2019real} as in our simulations we assume that a subset of PV devices are solely responsible for creating instabilities and are not controllable.  This renders the approach approach in \cite{singhal2019real} ineffective in mitigating inverter-driven oscillations.

Simulation results for all feeders/scenarios are depicted in Figs. \ref{fig:exp_result} - \ref{fig:exp_result_y_stats}, where distinct experiments are ordered column-wise.  In the figures, the first column shows results for Scenario 1 on the IEEE 37 node test feeder, column two shows results for Scenario 2 on the IEEE 37 node test feeder, and column three shows results for Scenario 1 on the IEEE 8500 test feeder.  Additionally, each column of Figure \ref{fig:exp_result} depicts three voltage timeseries from a select node and phase in each respective feeder featuring a base case without adaptive control (show in blue), the results from the application of adaptive voltage bias (show in purple), and results from the application of adaptive power injection (show in green).  Directly under each voltage timeseries subplot is an additional timeseries showing the associated oscillation \textquote{energy} (i.e., $y$ from Fig. \ref{fig:observer}).  Figure \ref{fig:exp_result_y_stats} depicts feeder-wide statistics of $y$ as a function of simulation time for each separate feeder/scenario, for the base case, adaptive voltage bias, and adaptive power injection.  We now discuss simulation results associated with each feeder/scenario.  Note that parameters of the adaptive controllers used in these experiments can be found in Table \ref{table:params}.
}

\begin{description}[leftmargin=0pt]
\bhl{
\item[Scenario 1 - IEEE 37 Node Feeder:] Simulation results for the smart inverter deployment depicted in Fig. \ref{fig:ieee37_uniform} are shown in Fig. \ref{fig:exp_result} column 1.  In these experiments, at t = 100s, inverters in the set $\mathcal{M}_{u}$, which represent 30\% of the inverter resource at each node, were issued new VV/VW curves with steeper non-zero segments, resulting in an instability.  The voltage magnitude at node $741$ (phase C) without the presence of any adaptive control and the associated oscillation energy are shown subplots 1-2 of Fig. \ref{fig:exp_result} column 1.  The subplots 3-4 depict node 741 phase C voltage magnitude when inverters in $\mathcal{M}_{s}$ utilize adaptive voltage bias and the associated oscillation energy.  Subplots 5-6 depict node 741 phase C voltage when adaptive (reactive) power injection is employed at nodes where inverters in $\mathcal{M}_{s}$ are located and the associated oscillation energy.  Both control strategies mitigate the oscillations within approximately 80 seconds of the onset of the instability.  The results in Fig. \ref{fig:exp_result_y_stats}, column 1, indicate that both the adaptive power injection and the adaptive voltage bias controllers mitigate oscillations at all nodes in the system.\\
 }
\begin{figure}[h!]
\centering
\includegraphics[width=1\columnwidth]{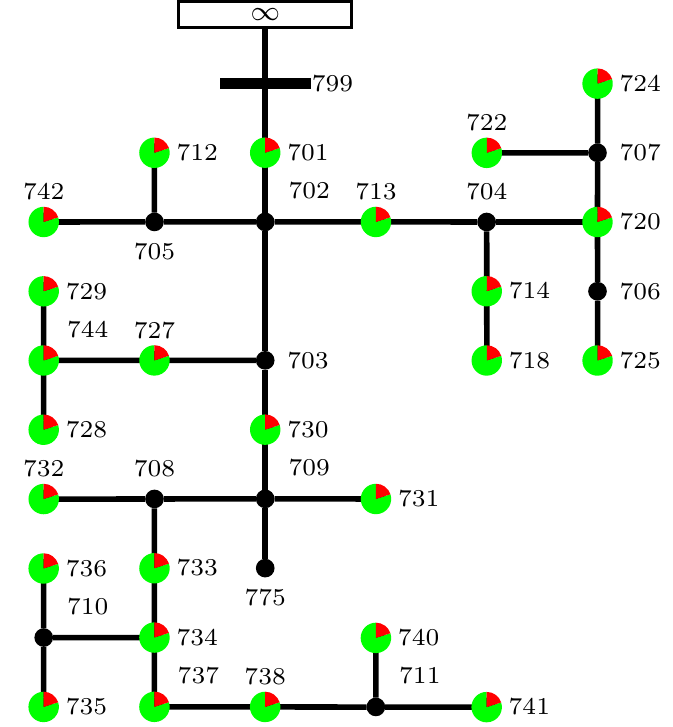}
% \resizebox {0.9\linewidth} {!} { \input{tikz/ieee_37_hack} }
\caption{IEEE 37 node test feeder with smart inverters from the sets $\mathcal{M}_{s}$ (with normalized capacity represented by green) and $\mathcal{M}_{u}$ (with normalized capacity represented by red) co-located at the same node (Scenario 1).}
\label{fig:ieee37_uniform}
\end{figure}
 
 \begin{figure}[h!]
\centering
\includegraphics[width=1\columnwidth]{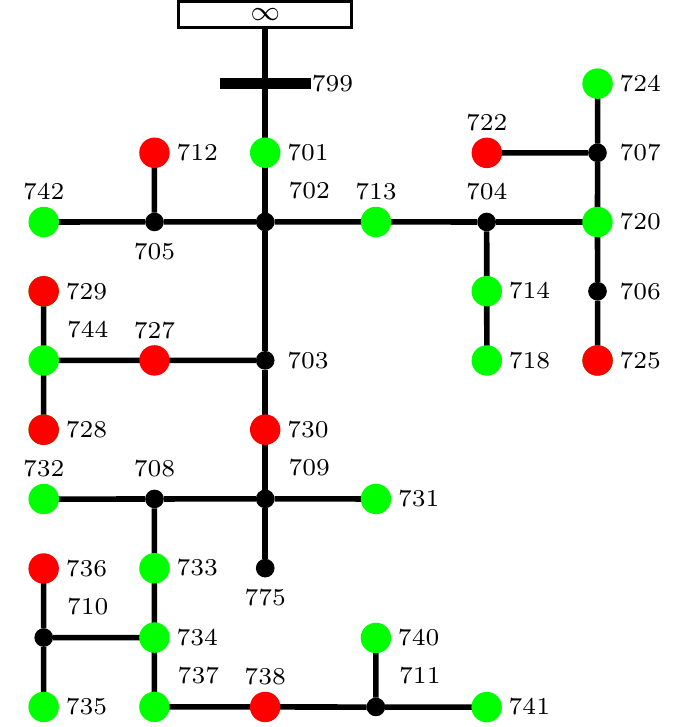}
% \resizebox {0.9\linewidth} {!} { \input{tikz/ieee_37_hack_100} }
\caption{IEEE 37 node test feeder with smart inverters from the sets $\mathcal{M}_{s}$ (with normalized capacity represented by green) and $\mathcal{M}_{u}$ (with normalized capacity represented by red) located at different nodes (Scenario 2).}
\label{fig:ieee37_diff}
\end{figure}

\bhl{
\item[Scenario 2 - IEEE 37 Node Feeder:] Simulation results for the smart inverter deployment depicted in Fig. \ref{fig:ieee37_diff} are shown in Fig. \ref{fig:exp_result} column 2. In these experiments, at t = 100s, inverters in the set $\mathcal{M}_{u}$, which represent approximately 30\% of the capacity in the system, were issued new VV/VW curves with steeper non-zero segments, resulting in an instability. The voltage magnitude at node $741$ (phase C) without the presence of any adaptive control and the associated oscillation energy are shown subplots 1-2 of Fig. \ref{fig:exp_result} column 2.  The subplots 3-4 depict node 741 phase C voltage magnitude when inverters in $\mathcal{M}_{s}$ utilize adaptive voltage bias and the associated oscillation energy.  Subplots 5-6 depict node 741 phase C voltage when adaptive (reactive) power injection is employed at nodes where inverters in $\mathcal{M}_{s}$ are located and the associated oscillation energy.  Both control strategies mitigate the oscillations within approximately 100 seconds of the onset of the instability.  The results in Fig. \ref{fig:exp_result_y_stats}, column 2, indicate that both the adaptive power injection and the adaptive voltage bias controllers mitigate oscillations at all nodes in the system.\\
}

\bhl{
\item[Scenario 1 - IEEE 8500 Node Feeder:] Simulation results for the smart inverter deployment on the IEEE 8500 feeder (Scenario 1) are shown in Fig. \ref{fig:exp_result} column 3.  In these experiments, at t = 100s, inverters in the set $\mathcal{M}_{u}$, which represent 30\% of the inverter resource at each node, were issued new VV/VW curves with steeper non-zero segments, resulting in an instability. The voltage magnitude at node $337668b0a$ (phase A) without the presence of any adaptive control and the associated oscillation energy are shown subplots 1-2 of Fig. \ref{fig:exp_result} column 3.  The subplots 3-4 depict node $337668b0a$ phase A voltage magnitude and the associated oscillation intensity when inverters in $\mathcal{M}_{s}$ utilize adaptive voltage bias.  The subplots 5-6 depicts node $337668b0a$ phase A voltage and the associated oscillation intensity when adaptive reactive power injection is employed at nodes where inverters in $\mathcal{M}_{s}$ are located.  Both control strategies mitigate the oscillations within approximately 60 seconds of the onset of the instability. The results in Fig. \ref{fig:exp_result_y_stats}, column 3, indicate that both the adaptive power injection and the adaptive voltage bias controllers mitigate oscillations at all nodes in the system. 
}
\end{description}

\begin{table}[]
\centering
\caption{Adaptive Controller Parameters for both Adaptive Voltage Bias and Direct Power Injection Experiments.}
\label{table:params}
\resizebox{1\columnwidth}{!}{
\begin{tabular}{cccc}
\toprule
  \textbf{Parameter} & \textbf{37 - Scn.1}  & \textbf{37 - Scn.2}  & \textbf{8500} \\
  \cmidrule(lr){1-1} \cmidrule(lr){2-2} \cmidrule(lr){3-3} \cmidrule(lr){4-4}
$\tau_{i}$ (low pass filter time constant) & 0.1 & 0.1 & 0.1\\
$c_{i}$, $d_{i}$ (sign of power injection and bias) & -1, 1 & -1, 1 & -1, 1\\
 $\gamma_{i}$, $\gamma_{q,i}$ (adaptation gains) & 0.1, 20 & 0.2, 50 & 0.2, 1\\
$\epsilon_{i}$ (adaptation threshold) & $10^{-4}$ & $10^{-4}$ & $10^{-4}$\\
\bottomrule
\end{tabular}
}
\end{table}

% \begin{table}[h!]
% \centering
% \caption{Adaptive controller parameters.  Control parameters are the same for both adaptive voltage bias and direct power injection experiments.}
% \begin{tabular}{|ll|}
% \hline
% \textbf{Parameter} & \textbf{Value} \\
% \hline
% $\tau_{i}$ (voltage measurement time constant) & 1\\
% $k_{i}$ (voltage \textquote{reference} signal time constant) & 0.1\\
% $d_{i}$ (direction of bias/power injection) & -1\\
% $\gamma_{i}$ (adaptation gain) & 0.1\\
% $\epsilon_{i}$ (adaptation threshold) & $1 \times 10^{-4}$\\
% \hline
% \end{tabular}
% %\vspace{0.1cm}
% \label{table:params}
% \end{table}

\begin{figure*}[]
  \centering
%   \def\datanocontrol{data/30_bounded_direct_pow_voltages744a_no_control.csv}
%   \def\datavoltagebias{data/30_bounded_voltage_bias_voltages744a_control.csv}
%   \def\datadirectpow{data/30_bounded_direct_pow_voltages744a_control.csv}
  
%   \def\datanocontrolhete{data/30_bounded_direct_pow_voltages744a_no_control_hete.csv}
%   \def\datavoltagebiashete{data/30_bounded_voltage_bias_voltages744a_control_hete.csv}
%   \def\datadirectpowhete{data/30_bounded_direct_pow_voltages744a_control_hete.csv}
  
%   \def\datanocontrollarge{data/30_bounded_voltage_no_control_8500_03.csv}
%   \def\datavoltagebiaslarge{data/30_bounded_voltage_bias_voltage_control_8500_03.csv}
%   \def\datadirectpowlarge{data/30_bounded_voltage_direct_pow_voltage_control_8500_03.csv}
  
% \resizebox {1\linewidth} {1\height} { \input{tikz/exp_result_figure_spread}}
\includegraphics[scale=1]{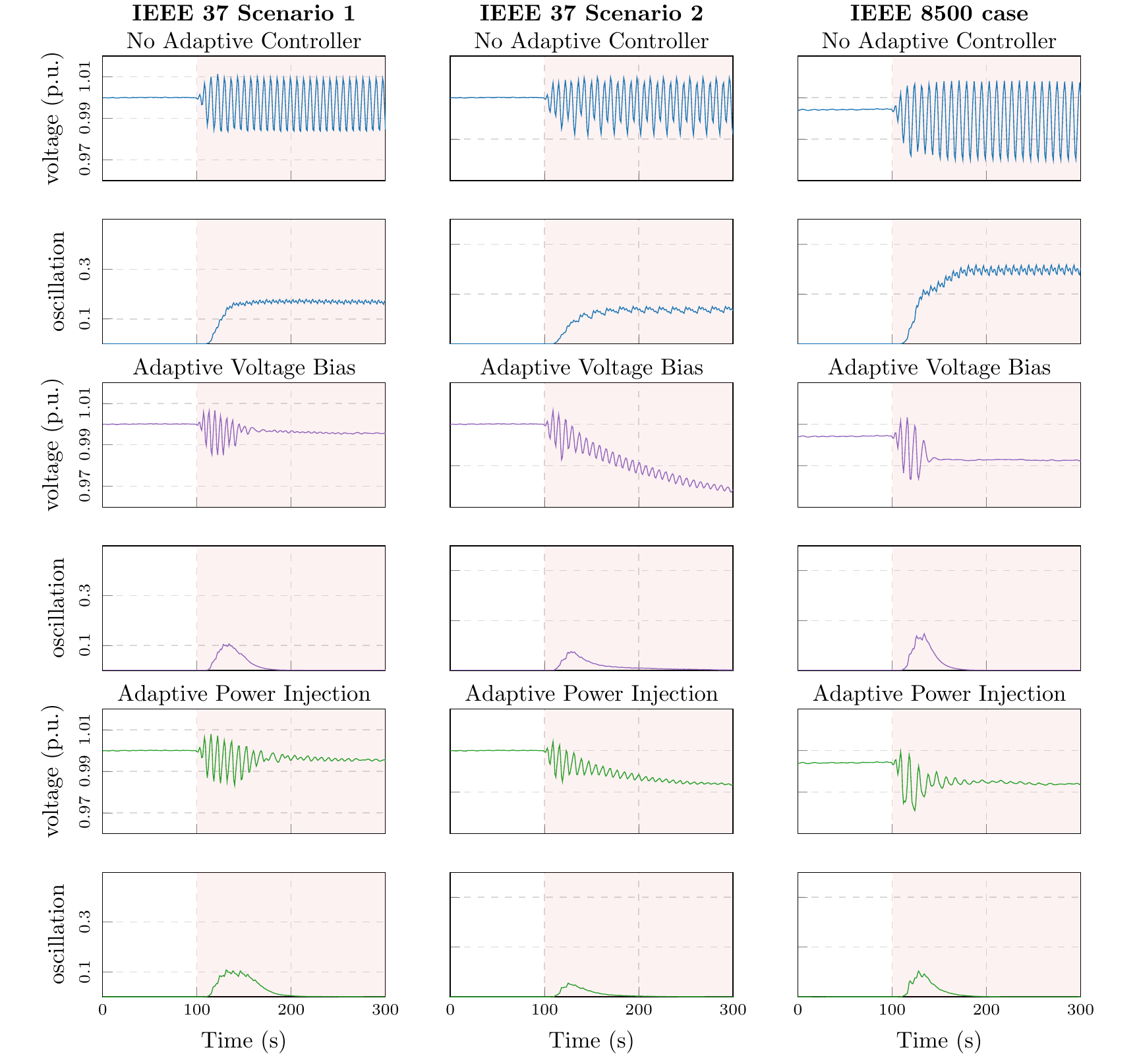}
\caption{\textbf{Column 1:} Node 741 (phase C) voltage during simulation experiments in the scenario depicted in Fig. \ref{fig:ieee37_uniform} (Scenario 1). \textbf{Column 2:} Node 741 (phase C) voltage during simulation experiments in the scenario depicted in Fig. \ref{fig:ieee37_diff} (Scenario 2). \textbf{Column 3:} Node 337668b0a (phase A) voltage during simulation experiments in the IEEE 8500 node feeder in Scenario 1. The shaded red region shows the period of the timeseries where a portion of smart inverters at each node have been issued unstable VV/VW control curves.}
\label{fig:exp_result}
\end{figure*}

\begin{figure*}[t]
  \centering
%   \def\datanocontrol{data/30_bounded_direct_pow_voltages744a_no_control.csv}
%   \def\datavoltagebias{data/30_bounded_voltage_bias_voltages744a_control.csv}
%   \def\datadirectpow{data/30_bounded_direct_pow_voltages744a_control.csv}
  
%   \def\datanocontrolhete{data/30_bounded_direct_pow_voltages744a_no_control_hete.csv}
%   \def\datavoltagebiashete{data/30_bounded_voltage_bias_voltages744a_control_hete.csv}
%   \def\datadirectpowhete{data/30_bounded_direct_pow_voltages744a_control_hete.csv}
  
%   \def\datanocontrollarge{data/30_bounded_voltage_no_control_8500_03.csv}
%   \def\datavoltagebiaslarge{data/30_bounded_voltage_bias_voltage_control_8500_03.csv}
%   \def\datadirectpowlarge{data/30_bounded_voltage_direct_pow_voltage_control_8500_03.csv}
  
% \resizebox {1\linewidth} {1\height} { \input{tikz/exp_result_figure_spread_y_stats}}
\includegraphics[scale=1]{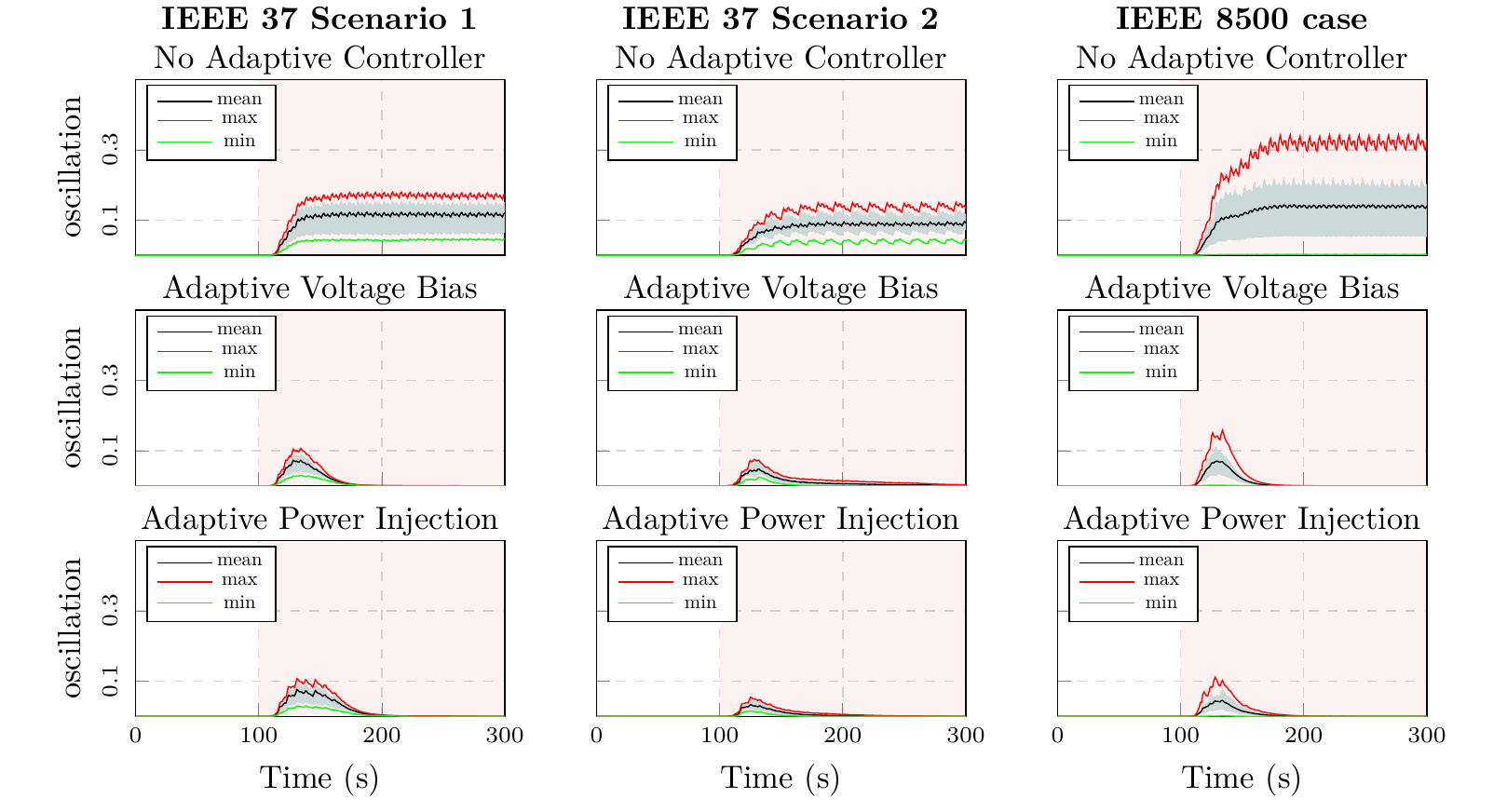}
\caption{\textbf{Column 1:} Oscillation energy during simulation experiments in the scenario depicted in Fig. \ref{fig:ieee37_uniform} (Scenario 1). \textbf{Column 2:} Oscillation energy during simulation experiments in the scenario depicted in Fig. \ref{fig:ieee37_diff} (Scenario 2). \textbf{Column 3:} Oscillation energy during simulation experiments in the IEEE 8500 node feeder in Scenario 1. The shaded blue region shows the 25th and 75th percentile of oscillation energy across all nodes. The shaded red region shows the period of the timeseries where a portion of smart inverters at each node have been issued unstable VV/VW control curves.}
\label{fig:exp_result_y_stats}
\end{figure*}

\section{Discussions}
\label{sec:conclusions}

This paper explored the use of adaptive control to manage a) smart inverters with Volt-VAR and Volt-Watt capabilities and b) devices capable of direct power injection (such as battery storage systems) to mitigate oscillations in the electric grid voltages introduced by portions of smart inverters with unstable VV/VW functions.  Control of smart inverters with Volt-VAR and Volt-Watt control is accomplished via introducing an offset into the voltage magnitude that is input into VV/VW functions.  \bhl{Control of devices capable of direct power injection is accomplished via issuing new active/reactive power injection setpoints.  In both cases, the adaptation law increases the voltage offset or power injection until system voltages are driven into regions where the local Lipschitz constants of destabilizing smart inverters are within stability limits, thus mitigating the oscillations in grid voltages.}  Simulation results show the effectiveness of the approach in managing smart inverters in unbalanced distribution systems. \bhl{We have tested the approach under a variety of different solar profiles and times of day (e.g., morning, noon, mid afternoon) and found no appreciable differences in the performance of the adaptive control scheme. }

The adaptation law is driven by the absolute value of the error between the unstable voltage magnitude and a stable voltage reference signal.  Although this reference signal is fictitious, an extremely effective proxy for the reference is the low pass filtered nodal voltage.  The error between the voltage magnitude and the low pass filtered voltage ``reference" will be approximately 0 once the instabilities are mitigated.  Therefore, the proposed method constitutes an essential mechanism for online, decentralized, and model-free mitigation of smart inverter-driven instabilities immediately after these instabilities manifest in the system.  

\bhl{Given the simplicity of implementation of the proposed adaptive control scheme, in the future we plan to look for opportunities to test the approach in hardware-in-the-loop experiments.}

\appendix
\label{sec:appendix}

%%%%%%%%%%%%%%%%%%%%%%%%%%%%%%%%%%%%%%%%%%%%%%%%%%%%%%%%%%%
\subsection{Smart Inverter Stability Criterion}
\label{subsec:appendix_stability}
\bhl{
The following proposition ties the stability of \eqref{eq:sdot2} to the system impedances, $\mathbf{Z}$, and $\mathbf{C}_{s}$.

\begin{proposition}
\label{prop:direct}
Let $\mathbf{M} \in \mathcal{R}^{2n \times 2n}$ be a positive definite and symmetric matrix.  The system of \eqref{eq:sdot} is asymptotically stable if:
\begin{equation}
    -\underline{\lambda} + \overline{\lambda}  \norm{ \mathbf{C}_{s}\mathbf{Z}}_{2} \le 0 \label{eq:stab_cond},
\end{equation}

\noindent where $\underline{\lambda} = \lambda_{\min}(\mathbf{MT}^{-1})$ and $\overline{\lambda} = \lambda_{\max}(\mathbf{MT}^{-1})$.
\end{proposition}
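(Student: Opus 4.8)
The plan is to prove asymptotic stability of the equilibrium $\mathbf{s}^{*}$ of \eqref{eq:sdot2} by a direct Lyapunov argument in error coordinates, using the quadratic form supplied by $\mathbf{M}$. First I would introduce the equilibrium $\mathbf{s}^{*}$ satisfying $\mathbf{0} = \mathbf{f}(\mathbf{Z}\mathbf{s}^{*}+\bar{\mathbf{v}}) - \mathbf{s}^{*}$ and the error $\tilde{\mathbf{s}} = \mathbf{s} - \mathbf{s}^{*}$. Subtracting the equilibrium relation from \eqref{eq:sdot2} gives the error dynamics $\mathbf{T}\dot{\tilde{\mathbf{s}}} = \mathbf{f}(\mathbf{Z}\mathbf{s}+\bar{\mathbf{v}}) - \mathbf{f}(\mathbf{Z}\mathbf{s}^{*}+\bar{\mathbf{v}}) - \tilde{\mathbf{s}}$. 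I would then take $V = \tfrac12 \tilde{\mathbf{s}}^{\top}\mathbf{M}\tilde{\mathbf{s}}$, which is positive definite and radially unbounded since $\mathbf{M}\succ 0$, so that establishing $\dot V < 0$ for $\tilde{\mathbf{s}}\neq\mathbf{0}$ yields asymptotic stability.

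Next I would differentiate $V$ along the error trajectory. Abbreviating $\mathbf{g} := \mathbf{f}(\mathbf{Z}\mathbf{s}+\bar{\mathbf{v}}) - \mathbf{f}(\mathbf{Z}\mathbf{s}^{*}+\bar{\mathbf{v}})$ and substituting $\dot{\tilde{\mathbf{s}}} = \mathbf{T}^{-1}(\mathbf{g}-\tilde{\mathbf{s}})$ splits the derivative into a destabilizing controller term and a stabilizing filter term:
\[ \dot V = \tilde{\mathbf{s}}^{\top}\mathbf{M}\mathbf{T}^{-1}\mathbf{g} - \tfrac12\tilde{\mathbf{s}}^{\top}\big(\mathbf{M}\mathbf{T}^{-1}+\mathbf{T}^{-1}\mathbf{M}\big)\tilde{\mathbf{s}}. \]
The crucial step is to bound $\mathbf{g}$ via Assumption \ref{assumption:f_2}: since each scalar map $f_{p,i},f_{q,i}$ is Lipschitz and depends only on $v_{i}$, one obtains the element-wise inequality $\lvert\mathbf{g}\rvert \le \mathbf{C}_{s}\lvert\mathbf{Z}\tilde{\mathbf{s}}\rvert$. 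Because $\mathbf{C}_{s}=[\mathbf{C}_{p},\mathbf{C}_{q}]^{\top}$ has non-negative entries and diagonal blocks, each entry of $\mathbf{C}_{s}\mathbf{Z}\tilde{\mathbf{s}}$ is a non-negative multiple of a single entry, so $\mathbf{C}_{s}\lvert\mathbf{Z}\tilde{\mathbf{s}}\rvert = \lvert\mathbf{C}_{s}\mathbf{Z}\tilde{\mathbf{s}}\rvert$; taking Euclidean norms and using $\norm{\mathbf{x}}_{2}=\norm{\,\lvert\mathbf{x}\rvert\,}_{2}$ yields $\norm{\mathbf{g}}_{2} \le \norm{\mathbf{C}_{s}\mathbf{Z}\tilde{\mathbf{s}}}_{2} \le \norm{\mathbf{C}_{s}\mathbf{Z}}_{2}\,\norm{\tilde{\mathbf{s}}}_{2}$, which is precisely the factor appearing in \eqref{eq:stab_cond}.

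I would then combine the two bounds. Cauchy--Schwarz applied to the controller term gives $\tilde{\mathbf{s}}^{\top}\mathbf{M}\mathbf{T}^{-1}\mathbf{g} \le \overline{\lambda}\,\norm{\mathbf{C}_{s}\mathbf{Z}}_{2}\,\norm{\tilde{\mathbf{s}}}_{2}^{2}$, while the symmetric filter term is lower-bounded by $\underline{\lambda}\norm{\tilde{\mathbf{s}}}_{2}^{2}$, so that $\dot V \le \big(-\underline{\lambda} + \overline{\lambda}\norm{\mathbf{C}_{s}\mathbf{Z}}_{2}\big)\norm{\tilde{\mathbf{s}}}_{2}^{2}$. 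Under the hypothesis \eqref{eq:stab_cond} the bracket is non-positive, and by promoting it to a strict inequality away from the origin (or by invoking LaSalle's invariance principle exactly as in the proof of Theorem \ref{thm:direct_power_inj}) the equilibrium is asymptotically stable.

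The main obstacle is the eigenvalue bookkeeping, since $\mathbf{M}\mathbf{T}^{-1}$ is in general not symmetric: the filter quadratic form only sees its symmetric part $\tfrac12(\mathbf{M}\mathbf{T}^{-1}+\mathbf{T}^{-1}\mathbf{M})$, and the operator norm controlling the controller term need not coincide with the largest eigenvalue. The stated quantities $\underline{\lambda}=\lambda_{\min}(\mathbf{M}\mathbf{T}^{-1})$ and $\overline{\lambda}=\lambda_{\max}(\mathbf{M}\mathbf{T}^{-1})$ are exactly the correct bounds when $\mathbf{M}\mathbf{T}^{-1}$ is symmetric --- in particular when the low-pass time constants are equal, $\mathbf{T}=\tau\mathbf{I}$, or more generally when $\mathbf{M}$ commutes with $\mathbf{T}$ --- since then $\mathbf{M}\mathbf{T}^{-1}$ is symmetric positive definite with $\norm{\mathbf{M}\mathbf{T}^{-1}}_{2}=\overline{\lambda}$ and $\tilde{\mathbf{s}}^{\top}\mathbf{M}\mathbf{T}^{-1}\tilde{\mathbf{s}}\ge\underline{\lambda}\norm{\tilde{\mathbf{s}}}_{2}^{2}$. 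I would develop this symmetric case as the main line of the argument and, for general diagonal $\mathbf{T}$, replace $\underline{\lambda}$ and $\overline{\lambda}$ by the smallest eigenvalue of the symmetric part and the spectral norm of $\mathbf{M}\mathbf{T}^{-1}$, respectively, which preserves both the structure of \eqref{eq:stab_cond} and the conclusion.
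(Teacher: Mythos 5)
Your proof is correct and follows essentially the same route as the paper's own: shift coordinates to the equilibrium, take $V = \tfrac{1}{2}\Delta\mathbf{s}^\top\mathbf{M}\Delta\mathbf{s}$, split $\dot V$ into the Lipschitz-bounded controller term and the symmetric filter term, and conclude $\dot V \le \qty(-\underline{\lambda} + \overline{\lambda}\norm{\mathbf{C}_{s}\mathbf{Z}}_{2})\norm{\Delta\mathbf{s}}_{2}^{2}$. If anything, you are more careful than the paper on the one delicate point: the paper silently interchanges $\norm{\mathbf{MT}^{-1}}_{2}$ with the extreme eigenvalues of $\mathbf{MT}^{-1}$ (legitimate only when that product is symmetric), whereas you flag this non-symmetry issue explicitly and give the correct repair via the symmetric part and the spectral norm.
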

} 
\begin{proof}
    \bhl{Noting the equilibrium $\mathbf{s}^{*}$ of \eqref{eq:sdot} is 
    \begin{equation}
            \mathbf{0} = \mathbf{f}(\mathbf{Z}\mathbf{s}^{*} + \mathbf{\bar{v}}) -\mathbf{s}^{*}\label{eq:eqpoint},
    \end{equation}
    
    \noindent define the shifted set of coordinates $\Delta \mathbf{s} = \mathbf{s} - \mathbf{s}^{*}$ which translate the equilibrium to the origin.  The dynamics in the new coordinate system are:
    \begin{equation}
        \mathbf{T}\Delta\mathbf{\dot{s}} = \mathbf{f}(\mathbf{Z}\Delta\mathbf{s} + \mathbf{Z}\mathbf{s}^{*} + \mathbf{\bar{v}}) - \mathbf{f}(\mathbf{Z}\mathbf{s}^{*} + \mathbf{\bar{v}}) -\Delta\mathbf{s} \label{eq:delsdot}.
    \end{equation}

Let $\mathbf{\alpha} = \mathbf{Z}\mathbf{s}^{*} + \mathbf{\bar{v}}$.  Using the Lyapunov function $V = \frac{1}{2}\Delta \mathbf{s}^\top\mathbf{M}\Delta \mathbf{s}$, the derivative of the state trajectories of \eqref{eq:delsdot} along $V$ are:
\begin{subequations}
    \begin{align}
        \dot{V} &= \Delta \mathbf{s}^\top \mathbf{MT}^{-1}\Big( \mathbf{f}(\mathbf{Z} \Delta \mathbf{s} + \mathbf{\alpha}) - \mathbf{f}(\mathbf{\alpha}) \Big) \nonumber \\
        & - \frac{1}{2}\Delta \mathbf{s}^\top \big( \mathbf{T}^{-1}\mathbf{M} + \mathbf{MT}^{-1}\big)\Delta \mathbf{s} \label{eq:ve_dot1}\\
        &\le \begin{Vmatrix}
            \Delta \mathbf{s}
        \end{Vmatrix}_{2} \cdot \begin{Vmatrix}
            \mathbf{MT}^{-1}
            \end{Vmatrix}_{2} \cdot \begin{Vmatrix}
            \mathbf{f}(\mathbf{Ze}_{s} + \mathbf{\alpha}) - \mathbf{f}(\mathbf{\alpha})
        \end{Vmatrix}_{2} \nonumber \\
        & -\begin{Vmatrix}
            \mathbf{MT}^{-1}
            \end{Vmatrix}_{2} \cdot \begin{Vmatrix}
            \Delta \mathbf{s}
        \end{Vmatrix}_{2}^{2}\label{eq:ve_dot2}\\
        &\le \begin{Vmatrix}
            \Delta \mathbf{s}
        \end{Vmatrix}_{2} \cdot \begin{Vmatrix}
            \mathbf{MT}^{-1}
            \end{Vmatrix}_{2} \cdot \begin{Vmatrix}
            \mathbf{C}_{s}\mathbf{Z}\Delta \mathbf{s}
        \end{Vmatrix}_{2} \nonumber \\
        &  -\begin{Vmatrix}
            \mathbf{MT}^{-1}
            \end{Vmatrix}_{2} \cdot \begin{Vmatrix}
            \Delta \mathbf{s}
        \end{Vmatrix}_{2}^{2} \label{eq:ve_dot3}\\
        &\le \qty(-\underline{\lambda} + \overline{\lambda}\begin{Vmatrix}
            \mathbf{C}_{s}\mathbf{Z}
        \end{Vmatrix}_{2})\begin{Vmatrix}
            \Delta \mathbf{s}
        \end{Vmatrix}_{2}^{2} \label{eq:ve_dot4},
    \end{align}
\end{subequations}

\noindent where \eqref{eq:ve_dot3} follows from the fact that $\mathbf{f}$ is a Lipschitz nonlinearity with Lipschitz constant $\mathbf{C}_{s}$. 
}
\end{proof}

%%%%%%%%%%%%%%%%%%%%%%%%%%%%%%%%%%%%%%%%%%%%%%%%%%%%%%%%%%%
\subsection{Supporting Analysis for Theorems \ref{thm:direct_power_inj} \& \ref{thm:bias}}
\label{subsec:appendix_proof}
Consider the smart inverter voltage measurement and power injection update dynamics given by:
\begin{equation}
    \mathbf{T}\mathbf{\dot{s}} = \mathbf{f}(\mathbf{Zs} + \mathbf{\bar{v}} + \mathbf{m}^{*}) -\mathbf{s} \label{eq:s_app2},
\end{equation}

\noindent where $\mathbf{m}^{*}$ is a constant vector. Assume that the system equilibrium lies in a region where the local Lipschitz constants of the inverter VV and VW functions meet the stability criteria of Proposition \ref{prop:direct} (i.e. Assumption \ref{assumption:direct2} holds).  Consider a system with equivalent dynamics but different states:
\begin{equation}
    \mathbf{T}\mathbf{\dot{s}}_{r} = \mathbf{f}(\mathbf{Zs}_{r} + \mathbf{\bar{v}} + \mathbf{m}^{*}) -\mathbf{s}_{r}  \label{eq:s_app_ref2},
\end{equation}

Let $\mathbf{e}_{s} = \mathbf{s} - \mathbf{s}_{r}$ denote the error between the two models \eqref{eq:s_app2} - \eqref{eq:s_app_ref2}.  The error dynamics can now be expressed as:
\begin{align}
    \mathbf{T}\dot{\mathbf{e}}_{s} &= - \mathbf{e}_{s} + \mathbf{f}(\mathbf{Ze}_{s} + \mathbf{Zs}_{r} + \mathbf{\bar{v}} + \mathbf{m}^{*}) \nonumber \\ & - \mathbf{f}(\mathbf{Zs}_{r} + \mathbf{\bar{v}} + \mathbf{m}^{*})  \label{eq:es_app2}.
\end{align}

\begin{proposition}
\label{prop:error}
Given the systems \eqref{eq:s_app2} - \eqref{eq:s_app_ref2}, and the associated error system of \eqref{eq:es_app2}, consider the function $V = \frac{1}{2}\mathbf{e}_{s}^\top\mathbf{M}\mathbf{e}_{s}$, where $M$ is positive definite and symmetric.  If Assumption \ref{assumption:direct2} holds for \eqref{eq:s_app2} - \eqref{eq:s_app_ref2}, then $\dot{V}$ is negative definite.
\end{proposition}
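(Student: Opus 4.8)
The plan is to exploit the fact that the error dynamics \eqref{eq:es_app2} are structurally identical to the shifted dynamics \eqref{eq:delsdot} analyzed in the proof of Proposition \ref{prop:direct}, so that the same Lyapunov estimate applies almost verbatim. First I would absorb the constant offset by defining $\bm{\alpha} = \mathbf{Z}\mathbf{s}_{r} + \bar{\mathbf{v}} + \mathbf{m}^{*}$ and rewriting \eqref{eq:es_app2} compactly as $\mathbf{T}\dot{\mathbf{e}}_{s} = \mathbf{f}(\mathbf{Z}\mathbf{e}_{s} + \bm{\alpha}) - \mathbf{f}(\bm{\alpha}) - \mathbf{e}_{s}$, which matches \eqref{eq:delsdot} under the substitution $\Delta\mathbf{s} \mapsto \mathbf{e}_{s}$ and $\mathbf{Z}\mathbf{s}^{*}+\bar{\mathbf{v}} \mapsto \bm{\alpha}$.

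Next I would differentiate $V = \tfrac{1}{2}\mathbf{e}_{s}^\top\mathbf{M}\mathbf{e}_{s}$ along trajectories. Using $\mathbf{M}=\mathbf{M}^\top$ together with the scalar identity $\mathbf{e}_{s}^\top\mathbf{M}\mathbf{T}^{-1}\mathbf{e}_{s} = \tfrac{1}{2}\mathbf{e}_{s}^\top(\mathbf{M}\mathbf{T}^{-1}+\mathbf{T}^{-1}\mathbf{M})\mathbf{e}_{s}$, this reproduces the right-hand side of \eqref{eq:ve_dot1}. I would then bound the cross term by Cauchy--Schwarz and submultiplicativity of the induced $2$-norm (step \eqref{eq:ve_dot2}), invoke the Lipschitz bound $\|\mathbf{f}(\mathbf{Z}\mathbf{e}_{s}+\bm{\alpha})-\mathbf{f}(\bm{\alpha})\|_{2}\le\|\mathbf{C}_{s}\mathbf{Z}\mathbf{e}_{s}\|_{2}\le\|\mathbf{C}_{s}\mathbf{Z}\|_{2}\,\|\mathbf{e}_{s}\|_{2}$ (step \eqref{eq:ve_dot3}), and collect the quadratic-form contribution through $\underline{\lambda}=\lambda_{\min}(\mathbf{M}\mathbf{T}^{-1})$ and $\overline{\lambda}=\lambda_{\max}(\mathbf{M}\mathbf{T}^{-1})$ (both positive, since $\mathbf{M}\mathbf{T}^{-1}$ is similar to a symmetric positive definite matrix). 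This yields the estimate $\dot{V}\le\big(-\underline{\lambda}+\overline{\lambda}\|\mathbf{C}_{s}\mathbf{Z}\|_{2}\big)\|\mathbf{e}_{s}\|_{2}^{2}$ of the form \eqref{eq:ve_dot4}.

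Finally, I would close the argument by invoking Assumption \ref{assumption:direct2}: since the common equilibrium of \eqref{eq:s_app2}--\eqref{eq:s_app_ref2} is asymptotically stable, Proposition \ref{prop:direct} guarantees that the stability criterion \eqref{eq:stab_cond} holds for the Lipschitz constants active near that equilibrium, i.e. $-\underline{\lambda}+\overline{\lambda}\|\mathbf{C}_{s}\mathbf{Z}\|_{2}\le 0$. Setting $\sigma=\underline{\lambda}-\overline{\lambda}\|\mathbf{C}_{s}\mathbf{Z}\|_{2}\ge 0$ then gives $\dot{V}\le-\sigma\|\mathbf{e}_{s}\|_{2}^{2}$, with $\sigma>0$ whenever \eqref{eq:stab_cond} holds strictly, establishing that $\dot{V}$ is negative definite.

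The step I expect to be the main obstacle is the last one: rigorously tying the \emph{local} asymptotic stability asserted in Assumption \ref{assumption:direct2} to the inequality \eqref{eq:stab_cond}, which is phrased in terms of a single Lipschitz matrix $\mathbf{C}_{s}$. The honest content is to argue that in a neighborhood of the asymptotically stable equilibrium the VV/VW functions operate on segments whose local slopes define the relevant $\mathbf{C}_{s}$, so that the quantity $\|\mathbf{C}_{s}\mathbf{Z}\|_{2}$ entering \eqref{eq:ve_dot4} is precisely the one controlled by Proposition \ref{prop:direct}. This identification is what forces the coefficient to be non-positive and upgrades the inequality in \eqref{eq:ve_dot4} to strict negative definiteness of $\dot{V}$ for $\mathbf{e}_{s}\neq\mathbf{0}$.
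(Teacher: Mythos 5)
Your proposal is correct and follows essentially the same route as the paper: the identical Lyapunov function, the identical chain of bounds (Cauchy--Schwarz, submultiplicativity of the induced $2$-norm, and the Lipschitz estimate $\|\mathbf{f}(\mathbf{Z}\mathbf{e}_{s}+\bm{\alpha})-\mathbf{f}(\bm{\alpha})\|_{2}\le\|\mathbf{C}_{s}\mathbf{Z}\|_{2}\,\|\mathbf{e}_{s}\|_{2}$), arriving at the same final bound $\dot{V}\le\big(-\underline{\lambda}+\overline{\lambda}\|\mathbf{C}_{s}\mathbf{Z}\|_{2}\big)\|\mathbf{e}_{s}\|_{2}^{2}$. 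The only difference is that you make explicit the closing appeal to Assumption \ref{assumption:direct2} (reading it, as the paper's surrounding text does, as placing the equilibrium in a regime where the local Lipschitz constants satisfy \eqref{eq:stab_cond}), a step the paper's proof leaves implicit, so the two arguments are essentially identical.
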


\begin{proof}
Let $\mathbf{\alpha} = \mathbf{Zs}_{r} + \mathbf{\bar{v}} + \mathbf{m}^{*}$.  The derivative of $V$ is:
\begin{subequations}
    \begin{align}
        \dot{V} &= \mathbf{e}_{s}^\top \mathbf{MT}^{-1} \Big( - \mathbf{e}_{s} + \mathbf{f}(\mathbf{Ze}_{s} + \mathbf{\alpha})  - \mathbf{f}(\mathbf{\alpha}) \Big) \label{eq:app_ve_dot1}\\
        &\le \begin{Vmatrix}
            \mathbf{e}_{s}
        \end{Vmatrix}_{2} \cdot \begin{Vmatrix}
            \mathbf{MT}^{-1}
            \end{Vmatrix}_{2} \cdot \begin{Vmatrix}
            \mathbf{f}(\mathbf{Ze}_{s} + \mathbf{\alpha}) - \mathbf{f}(\mathbf{\alpha})
        \end{Vmatrix}_{2} \nonumber \\
        & - \begin{Vmatrix}
            \mathbf{MT}^{-1}
            \end{Vmatrix}_{2} \cdot \begin{Vmatrix}
            \mathbf{e}_{s}
        \end{Vmatrix}_{2}^{2} \label{eq:app_ve_dot2}\\
        &\le \begin{Vmatrix}
            \mathbf{e}_{s}
        \end{Vmatrix}_{2} \cdot \begin{Vmatrix}
            \mathbf{MT}^{-1}
            \end{Vmatrix}_{2} \cdot \begin{Vmatrix}
            \mathbf{C}_{s}\mathbf{Z}\mathbf{e}_{s}
        \end{Vmatrix}_{2} \nonumber \\
        & -\begin{Vmatrix}
            \mathbf{MT}^{-1}
            \end{Vmatrix}_{2} \cdot \begin{Vmatrix}
            \mathbf{e}_{s}
        \end{Vmatrix}_{2}^{2}\label{eq:app_ve_dot3}\\
        &\le \qty(-\underline{\lambda} + \overline{\lambda}\begin{Vmatrix}
            \mathbf{C}_{s}\mathbf{Z}
        \end{Vmatrix}_{2})\begin{Vmatrix}
            \mathbf{e}_{s}
        \end{Vmatrix}_{2}^{2} \label{eq:app_ve_dot4}
    \end{align}
\end{subequations}

\noindent where $\underline{\lambda} = \lambda_{\min}(\mathbf{MT}^{-1})$ and $\overline{\lambda} = \lambda_{\max}(\mathbf{MT}^{-1})$.
\end{proof}

%%%%%%%%%%%%%%%%%%%%%%%%%%%%%%%%%%%%%%%%%%%%%%%%%%
%% REFERENCES
%%%%%%%%%%%%%%%%%%%%%%%%%%%%%%%%%%%%%%%%%%%%%%%%%%

\bibliography{bibliog}
\bibliographystyle{IEEEtran}

\end{document}